\newif\ifdiagbox@cellEmpty@
  \def\diagbox@text{#1}}
    \def\diagbox@align{#1}%
\xpatchcmd{\diagbox@double}{%
  \setkeys{diagbox}{dir=NW,#1}%
}{%
  \if\relax\detokenize{#2}\relax
    \if\relax\detokenize{#3}\relax
      \diagbox@cellEmpty@true
      \setkeys{diagbox}{highest=1\line, align=l, text=\@empty}%
    \fi
  \fi
  \setkeys{diagbox}{dir=NW, #1}%
  \ifdiagbox@cellEmpty@
    \rlap{\makebox
      [\dimexpr\diagbox@wd-\diagbox@insepl-\diagbox@insepr\relax]%
      [\diagbox@align]%
      {\diagbox@text}}%
  \fi
}{}{\ddt}
  \newcommand\figcaption{\def\@captype{figure}\caption}
  \newcommand\tabcaption{\def\@captype{table}\caption}
\newtheorem{assumption}{Assumption}
\newtheorem{theorem}{Theorem}
\newtheorem{lem}{Lemma}
\newtheorem{definition}{Definition}
\newtheorem{prop}{Proposition}
\newtheorem{rmk}{Remark}
\newcommand{\calA}{\mathcal{A}}
\newcommand{\calD}{\mathcal{D}}
\newcommand{\calU}{\mathcal{U}}
\newcommand{\calX}{\mathcal{X}}
\newcommand{\calN}{\mathcal{N}}
\newcommand{\calG}{\mathcal{G}}
\newcommand{\RR}{\mathbb{R}}
\newcommand{\gray}[1]{{\color{gray}#1}}
\newcommand{\red}[1]{{\color{red}#1}}
\title{FedPass: Privacy-Preserving Vertical Federated Deep Learning \\ with Adaptive Obfuscation}
\author{
    Author Name
    \affiliations
    Affiliation
    \emails
    email@example.com
}
\author{
Hanlin Gu$^{*1}$
\and
Jiahuan Luo$^{*1}$\and
Yan Kang$^{1}$\and
Lixin Fan$^{\dagger 1}$\And
Qiang Yang$^{1,2}$
\affiliations
$^1$Webank, $^2$HKUST
\emails
\{allengu,jiahuanluo,yangkang,lixinfan\}@webank.com,
qyang@cse.ust.hk
}
\begin{document}

\maketitle
\def\thefootnote{*}\footnotetext{These authors contributed equally to this work}\def\thefootnote{\arabic{footnote}}
\def\thefootnote{$\dagger$}\footnotetext{Corresponding author}\def\thefootnote{\arabic{footnote}}
\begin{abstract}
Vertical federated learning (VFL) allows an active party with labeled feature to leverage auxiliary features from the passive parties to improve model performance. Concerns about the private feature and label leakage in both the training and inference phases of VFL have drawn wide research attention. In this paper, we propose a general privacy-preserving vertical federated deep learning framework called FedPass, which leverages adaptive obfuscation to protect the feature and label simultaneously. Strong privacy-preserving capabilities about private features and labels are theoretically proved (in Theorems \ref{thm:thm1} and \ref{thm2}). 
Extensive experimental result s with different datasets and network architectures also justify the superiority of FedPass against existing methods in light of its near-optimal trade-off between privacy and model performance.

\end{abstract}

\section{Introduction}

Vertical federated learning (VFL)~\cite{yang2019federated} allows multiple organizations to exploit in a privacy-preserving manner their private datasets, which may have some sample IDs in common but are significantly different from each other in terms of \textit{features}. VFL found a great deal of successful applications, especially in collaborations between banks, healthcare institutes, e-commerce platforms, etc.~\cite{yang2019federated,li2020review}.

Despite of these successful VFL applications, privacy risks ascribed to certain corner cases were reported, e.g., in
~\cite{jin2021cafe,fu2022label,he2019model}. 
On one hand, \textit{active parties} in VFL are concerned by the risk of leaking \textit{labels} to {passive parties}. On the other hand, \textit{passive parties} are keen to protect their \textit{private features} from being reconstructed by the active party. 
To this end, a variety of privacy defense mechanisms include adding noise \cite{fu2022label,liu2021defending}, gradient discretization \cite{dryden2016communication},
gradient sparsification \cite{aji2017sparse}, gradient compression \cite{lin2018deep}, and mixup \cite{huang2020instahide,zhang2018mixup} has been proposed to boost the \textit{privacy-preserving capability} of VFL. Nevertheless, as shown by detailed analysis and empirical study in this paper (see Sect. \ref{sec:exp}), all the aforementioned defense mechanisms suffer from deteriorated model performance to a certain extent. %This shortcoming, unfortunately, defeats the purpose of federated learning. 
In this paper, we analyze the root cause of compromised model performance and propose an effective privacy-preserving method, which not only provides a strong privacy guarantee under various privacy attacks but also maintains unbending model performance for a variety of experimental settings. 

%first illustrate common shortcomings of existing methods, and 

%passive parties who owned the data $x$ send the forward embedding to the active party. The adversary, i.e., 
%the active party, may recover the pixel-level private data of the passive party according to the forward embedding via model inversion attack \cite{he2019model,jiang2022comprehensive}. 
%On the other hand, the active party who has the label $y$ passes backward gradients to passive parties, which also poses a threat to the label. 
%For instance, passive parties may implement the direct label inference based on leaked gradients or model completion attack via two steps: first fine-tune the trained local model with an inference head using auxiliary labeled data, and then generate labels using the complete model \cite{fu2022label}. 
%In this paper, we propose and analyze a learnable privacy-preserving framework that improves the trade-off between privacy and utility. 
% First, we setup the problem as follows:

% \begin{figure}[ht]
% 	\centering
%         \vspace{-8pt}
% 	\includegraphics[width=3.3in]{imgs/framework/framework1.png}
% 	\caption{\small framework1 \textcolor{red}{} } \vspace{-10pt}
% 	\label{fig: framework}
% \end{figure}

%\textcolor{red}{(to be revised below ...)}

%The challenge of boosting model performances for existing privacy-preserving methods lies in the adoption of an adaptive 
Existing privacy defense mechanisms \cite{fu2022label,liu2020secure,dryden2016communication,aji2017sparse,lin2018deep,huang2020instahide} can be essentially viewed as an \textbf{obfuscation mechanism} that provides privacy guarantees by adopting an obfuscation function $g(\cdot)$\footnote{For the sake of brevity, we omit some detailed notations, which are elaborated on in Sect. \ref{sec:notation}} acting on private features $x$ or labels $y$, illustrated as follows: 
% \begin{flalign}\label{eq:fix-obfuscation}
%         x \stackrel{g(\cdot)}
%         \longrightarrow g(x) \stackrel{G_\theta}{\longrightarrow} H \stackrel{F_\omega}\longrightarrow g(y)
% \stackrel{g(\cdot)}\longleftarrow 
% y, 
% \end{flalign}
\begin{flalign}\label{eq:fix-obfuscation}
        x \stackrel{g(\cdot)}
        \longrightarrow g(x) \stackrel{G_\theta}{\longrightarrow} H \stackrel{F_\omega}\longrightarrow \ell \longleftarrow g(y)
\stackrel{g(\cdot)}\longleftarrow 
y, 
\end{flalign}
% \begin{flalign}\label{eq:fix-obfuscation}
%         x \stackrel{g(\cdot)}
%         \longrightarrow g(x) \stackrel{G}{\longrightarrow} H \longrightarrow \text{Active Party} 
% \end{flalign}
% \begin{flalign}\label{eq:fix-obfuscation}
%        \text{Passive Party} \longleftarrow \nabla_H
% \stackrel{F}{\longleftarrow} g(y) 
% \stackrel{g(\cdot)}\longleftarrow 
% y, 
% \end{flalign}
% $g(\cdot)$ denotes the protection obfuscation methods \textcolor{red}{explain $g(\cdot)$ again?}
in which $H$ is the forward embedding that the passive party transfers to the active party; $\ell$ is loss; $G$ and $F$ represent the passive and active model parameterized by $\theta$ and $\omega$, respectively.
Note that in Eq. \eqref{eq:fix-obfuscation}, the strength of privacy-preserving capability is prescribed by the \textit{extent of obfuscation} (a.k.a. \textit{distortion}) via a fixed hyper-parameter. As investigated in \cite{zhang2022trading,kang2022framework}, significant obfuscations inevitably bring about the loss of information in $g(x)$ and $g(y)$ and thus lead to the loss of model performance (empirical study in Sect. \ref{sec:exp}). We argue that a fixed obfuscation strategy does not consider the dynamic learning process and constitutes the root cause of model performance degradation.  

\begin{figure}[htbp]
\centering
\includegraphics[width=0.45\textwidth]{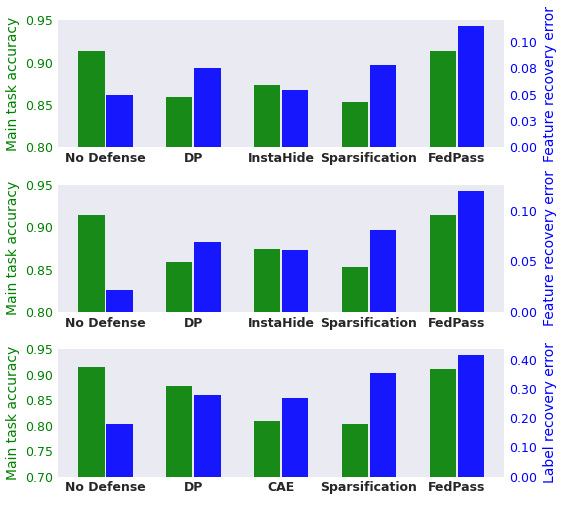}
\vspace{-1em}
\caption{Comparison of FedPass with baseline defense methods against \textbf{Model Inversion} attack \protect\cite{he2019model} (the first line), \textbf{CAFE} attack \protect\cite{jin2021cafe} (the second line), and \textbf{Model Completion} attack \protect\cite{fu2022label} (the third line) in terms of their main task accuracy (the higher the better, green) and data (feature or label) recovery error (the higher the better, blue) on ResNet-CIFAR10.}
\label{fig:bar_brief}
\end{figure}

As a remedy to shortcomings of the fixed obfuscation,  we propose to \textbf{adapt obfuscation function} $g_{\theta}$ and $g_{\omega}$ during the learning of model $F_\theta$ and $G_\omega$, such that the obfuscation itself is also optimized during the learning stage. 
%the influence on the model performance is persisted compared to the fixed obfuscation function $g_s$. 
That is to say, the learning of the obfuscation function also aims to preserve model performance by tweaking model parameters $\theta$ and $\omega$:
%Learning the mapping from $g_{\Theta, s}(x)$ to $g_{\Theta, s}(y)$ (Eq. \eqref{eq:adapt-obsfucation}) by maximizing the model performance would mitigate the loss of model performance obfuscation functions brings.
\vspace{-0.4em}
% \begin{flalign} \label{eq:adapt-obsfucation}
%                 x \stackrel{g_\red{\theta}()}{\longrightarrow} g_{ \textcolor{red}{\theta}}(x) \stackrel{G_\theta}{\longrightarrow} H  \stackrel{g_\red{\omega}()}{\longrightarrow} g_{\textcolor{red}{\omega}}(H)
% \stackrel{F_\omega}{\longrightarrow}  y.
% \end{flalign}
\begin{flalign} \label{eq:adapt-obsfucation}
                x \stackrel{g_\red{\theta}(\cdot)}{\longrightarrow} g_{ \textcolor{red}{\theta}}(x) \stackrel{G_\theta}{\longrightarrow} H  \stackrel{g_\red{\omega}(\cdot)}{\longrightarrow} g_{\textcolor{red}{\omega}}(H)
\stackrel{F_\omega}{\longrightarrow} \ell \longleftarrow y,
\end{flalign}
% we first propose a learnable privacy-preserving framework to protect the private data and label simultaneously without sacrificing the model performance as:
% \begin{equation}
% \begin{split}
%     \min_{\omega, \theta_1, \cdots, \theta_K} &\frac{1}{N}\sum_{i=1}^N\calL(F_{\omega} \circ (G_{\theta_1}(g_{\theta_1,s}(x_{i,1})), G_{\theta_2}(g_{\theta_2,s}(x_{i,2})) \\
% &\cdots, G_{\theta_K}(g_{\theta_K,s}(x_{i,K}))), g_{\omega, s}(y_{i,k})),
% \end{split}
% \end{equation}
% \textcolor{red}{Eq 3 and 4 should be merged?}
% We regard this adaptive obfuscation as the gist of our proposed method. Implementation-wise, we adopt the DNN passport technique originally designed for protecting the intellectual property of neural networks \cite{fan2021deepip,li2022fedipr} to VFL to defend against privacy attacks. The adaptive obfuscation, called \textbf{FedPass}, is implemented by embedding private passports in both active and passive party models. There are three advantages: first, private passports embedded in the passive and active models prevent adversaries from inferring features and labels. (see Sect. \ref{sec:analysis}).
% Second, passport-based obfuscation is learned in tandem with the optimization of model parameters, thus, preserving model performance (see Sect. \ref{sec:fedpass}). Third, the learnable obfuscation is efficient, with only minor computational costs incurred since no computationally extensive encryption operations are needed. 
We regard this adaptive obfuscation as the gist of the proposed method. In this work, we implement the adaptive obfuscation based on the passport technique originally designed for protecting the intellectual property of deep neural networks (DNN) \cite{fan2021deepip,li2022fedipr}. More specifically, we propose to embed \textit{private passports} in both active and passive party models of VFL to effectively defend against privacy attacks. We therefore name the proposed method \textbf{FedPass},
% The adaptive obfuscation, called \textbf{FedPass}, is implemented by embedding private passports in both active and passive party models. 
which has three advantages: i) Private passports embedded in passive and active models prevent attackers from inferring features and labels. It is \textit{exponentially hard} %the probability 
to infer features by launching various attacks, while attackers are defeated by a \textit{non-zero recovery error} when attempting to infer private labels (see Sect. \ref{sec:analysis}).
%is \textit{exponentially small} w.r.t. the passport dimension, see Sect. \ref{sec:analysis}) while 
ii) Passport-based obfuscation is learned in tandem with the optimization of model parameters, thus, preserving model performance (see Sect. \ref{sec:fedpass}). iii) The learnable obfuscation is efficient, with only minor computational costs incurred since no computationally extensive encryption operations are needed (see Sect. \ref{sec:exp}). 
%Consequently, FedPass can simultaneously offer a strong privacy guarantee on private features and labels and maintain unbending model performance. 

As Figure \ref{fig:bar_brief} illustrated, 
%provides a brief comparison which shows that 
FedPass achieves almost lossless main task accuracy while obtaining the largest data recovery error in defending against three privacy attacks investigated in our experiments (see  Sect. \ref{experiments} for more results).

\section{Related work} 
We review related work from three aspects, namely, \textit{vertical federated learning} (VFL) and \textit{privacy attacks in VFL}, and \textit{protection mechanism} 
\textbf{Vertical Federated Learning} A variety of VFL methods has been proposed.~\cite{hardy2017private} proposed vertical logistic regression (VLR) using homomorphic encryption (HE) to protect feature privacy.~\cite{blindFL} proposed vertical neural network that employs a hybrid privacy-preserving strategy combining HE and secret sharing (SS) to protect feature.~\cite{secureboost} proposed the SecureBoost, a VFL version of XGBoost, that leverages HE to protect the information exchanged among parties. To tackle the data deficiency issue of VFL,~\cite{kangyan2022fedcvt} combined semi-supervised learning and cross-view training to estimate missing features and labels for further training,~\cite{liu2020secure,kang2022prada} integrated transfer learning into VFL to help the target party predict labels, while~\cite{he2022hybrid} proposed a
federated hybrid self-supervised learning framework to boost the VFL model performance through self-supervised learning based on unlabeled data. 

% VFL aims to build a joint machine learning model using features dispersed among parties while protecting privacy~\cite{yang2019federatedb}. In recent years, the literature has presented various algorithms in the VFL setting. ~\cite{hardy2017private} proposed vertical logistic regression (VLR) using homomorphic encryption (HE) to protect data privacy. ~\cite{chen2021sshe} further enhanced the privacy-preserving capability of VLR by employing a hybrid strategy combining HE and secret sharing (SS). ~\cite{cheng2021secureboost} proposed the SecureBoost, a VFL version of XGBoost, that leverages HE to protect the parameters exchanged among parties. To tackle the data deficiency issue of VFL,~\cite{liu2020secure} integrated transfer learning into VFL to help the target party predict labels. ~\cite{kangyan2022fedcvt} applied a semi-supervised learning method to estimate missing features and labels for further training.

\textbf{Privacy attacks in VFL}: There are two categories of privacy attacks in VFL: feature inference (FI) attacks and label inference (LI) attacks. FI attacks are typically launched by the active party to infer the features of a passive party. They can be tailored to shallow models such as logistic regression~\cite{hu2022vertical} and decision trees~\cite{Luo2021fi}. For deep neural networks, model inversion~\cite{he2019model} and CAFE~\cite{jin2021cafe} are two representative FI attacks that infer private features through inverting passive parties' local models. LI attacks are mounted by a passive party to infer labels owned by the active party. The literature has proposed two kinds of LI attacks: the gradient-based~\cite{oscar2022split} and the model-based~\cite{fu2022label}. The former infers labels though analyzing the norm or direction of gradients back-propagated to the passive party, while the latter infers labels based on a pre-trained attacking model.

\textbf{Defense mechanism}
We divide defense mechanisms applied to VFL into three categories: Protect features, labels and model parameters (or gradients).
% illustrated in Eq. \eqref{eq:three-protection}: 
% \begin{flalign} \label{eq:three-protection}
%                 x \longrightarrow g(x) \stackrel{g(G_\theta)}{\longrightarrow} H & \longrightarrow  g(y)
% \stackrel{g(F_\omega)}{\longleftarrow}  y.
% \end{flalign}
% \begin{flalign} \label{eq:three-protection}
%                 x \longrightarrow g(x) \stackrel{g(G_\theta)}{\longrightarrow} H & \stackrel{g(F_\omega)} \longrightarrow \ell \longleftarrow  g(y)
% {\longleftarrow}  y,
% \end{flalign}
% \begin{itemize}
%     \item Protect features $g(x)$
%     \item Protect model (or gradients) $g(G_\theta)$ and $g(F_\omega)$
%     \item Protect labels $g(y)$
% \end{itemize}
% The general protection mechanism such as differential privacy~\cite{} can be used to protect both data and label. Gradient compression~\cite{} are proposed to thwart label inference attacks. Rece specialized defense strategies to protect
Specifically, most existing defense mechanisms typically apply to either model parameters or gradients to protect private data (features or labels). General defense mechanisms such as differential privacy~\cite{abadi2016deep} and sparsification~\cite{fu2022label,lin2018deep} can be used to defend against both the feature and label inference attacks by distorting (i.e., adding noise or compressing) model parameters or gradients. Specialized defense mechanisms such as MARVELL~\cite{oscar2022split} and Max-Norm~\cite{oscar2022split} are tailored to thwart label inference attacks by adding noise to gradients. InstaHide~\cite{huang2020instahide} and Confusional AutoEncoder~\cite{zou2022defending} are two representative defense mechanisms that encode private data directly to protect data privacy. Cryptography-based defense mechanisms such as HE \cite{hardy2017private} and MPC \cite{gascon2016secure} can protect both features and labels, but they impose a huge burden on computation and communication, especially for deep neural networks. 

\begin{table*}[htbp] 
\center
\begin{tabular}{ c c c c c } 
\hline
  Threat model & Adversary & Attacking Target & Attacking Method & Adversary's Knowledge  \\ 
   \hline \hline
  % \midrule
   \multirow{3}{*}{Semi-honest} & A passive party & \textit{Labels} owned by the active party & Model Completion & A few labeled samples \\
   \cline{2-5}
   & \multirow{2}{*}{The active party} & \textit{Features} owned by a passive party & Model Inversion & Some labeled samples \\ 
   &&\textit{Features} owned by a passive party & CAFE & Passive models \\ \hline
\end{tabular}
\vspace{-0.6em}
\caption{Threat model we consider in this work.}
\label{tab_threat_model}
\end{table*}

\section{The Proposed Method}

% \begin{figure*}[htbp]
% \centering
% \includegraphics[width=0.99\textwidth]{imgs/framework/vfl-pst.PNG}
% % \vspace{-0.6em}
% \caption{Overview of FedPass. The left figure illustrates multiple passive parties and one active party collaboratively learn the VFL model, where passive parties only have the private feature aimed to infer the label $y$ of active party and active party have the label aimed to infer the feature $x$ of passive parties. The right figure demonstrates the details of adaptive obfuscation method in one layer neural network.}
% \label{fig:Fedpass}
% %\vspace{-5pt}
% \end{figure*}

% We elaborate on the proposed adaptive obfuscation framework, FedPass (see Algo. \ref{alg:aof-vfl}). 

We introduce the VFL setting and threat models in Sect. \ref{sec:notation} and Sect. \ref{sec:threat-model}, followed by elaboration on the proposed adaptive obfuscation framework, FedPass, in Sect. \ref{sec:fedpass}. We analyze the privacy preserving capability of FedPass in Sect. \ref{sec:analysis}.
\begin{figure}[!t]
\centering
\includegraphics[width=0.44\textwidth]{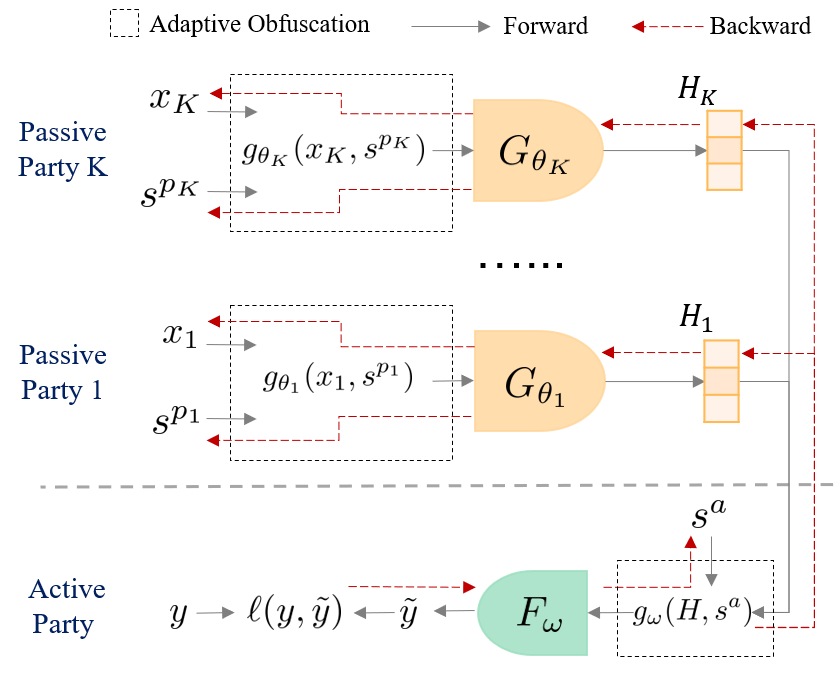}
\vspace{-3pt}
\caption{Overview of a VFL setting, in which multiple passive parties and one active party collaboratively train a VFL model, where passive parties only have the private features $x$, whereas the active party has private labels $y$. Both the active party and the passive party adopt the adaptive obfuscation by inserting passport into their models to protect features and labels.}
\label{fig:Fedpass}
\end{figure}

% The red dashed line denotes the back-propagated gradients.
\subsection{Vertical Federated Learning Setting} \label{sec:notation}
We assume that a vertical federated learning setting consists of one active party $P_0$ and $K$ passive parties $\{P_1, \cdots, P_K\}$  who collaboratively train a VFL model $\Theta=(\theta, \omega)$ to optimize the following objective: 
\begin{equation}\label{eq:loss-VFL}
\begin{split}
        \min_{\omega, \theta_1, \cdots, \theta_K} &\frac{1}{n}\sum_{i=1}^n\ell(F_{\omega} \circ (G_{\theta_1}(x_{1,i}),G_{\theta_2}(x_{2,i}), \\
        & \cdots,G_{\theta_K}(x_{K,i})), y_{i}),
\end{split}
\end{equation}

% in which Party $P_j$ owns features $x_j \in \mathcal{X}_j$ and the passive model $G_{\theta_j}$, the active party owns the labels $y \in \mathcal{Y}$ and active model $F_\omega$, $\mathcal{X}_j$ and $\mathcal{Y}$ are the feature space of party $P_j$ and the label space respectively. Note that the passive party $j$ transfers the forward embedding $H_j$ to the active party in terms of helping train the active model $F$, and passive models $G$ are trained from private data $x$ and backward gradient transferred from the active party (See Figure \ref{fig:Fedpass} for illustration). Note that, before training, all parties leverage Private Set Intersection (PSI) protocols to align data records with the same IDs. 
in which Party $P_k$ owns features $\calD_k = (x_{k,1}, \cdots, x_{k,n}) \in \mathcal{X}_k$ and the passive model $G_{\theta_k}$, the active party owns the labels $y \in \mathcal{Y}$ and active model $F_\omega$, $\mathcal{X}_k$ and $\mathcal{Y}$ are the feature space of party $P_k$ and the label space respectively. Each passive party $k$ transfers its forward embedding $H_k$ to the active party to compute the loss. The active model $F_\omega$ and passive models $G_{\theta_k},k \in \{1,\dots,K\}$ are trained based on backward gradients (See Figure \ref{fig:Fedpass} for illustration). Note that, before training, all parties leverage Private Set Intersection (PSI) protocols to align data records with the same IDs. 

% \textcolor{red}{revise this section according to the suggestions by Fan. $K$ is the number of passive parties, but $N$ is the number of samples.} There are one active party $P_0$ and $K$ passive parties $\{P_j\}_{j=1}^K$. 
% The features $x = [x_1, \cdots, x_N]$ is vertically partitioned among $N$ passive parties by feature. Party $P_i$ owns $x_i \in \mathcal{X}_i$. The active party $P_0$ owns the labels $y \in \mathcal{Y}$. $\mathcal{X}_i$ and $\mathcal{Y}$ are the feature space of party $P_i$ and the label space, respectively. 
% The passive parties $\{P_i\}_{i=1}^N$ train passive models $\{G_{\theta_{i}}\}_{i=1}^N$ to collaboratively transfer the forward embedding to the active party, and the active party $P_0$ trains the active model $F_{\omega}$ to fit the labels and transfer backward gradients to the passive party. Thus, the VFL model $\Theta$ comprises a passive model $F_{\omega}$ and $N$ passive models $\{G_{\theta_{i}}\}_{i=1}^N$, which optimizes
% \begin{equation}\label{eq:loss-VFL}
% \begin{split}
%         \min_{\omega, \theta_1, \cdots, \theta_K} &\frac{1}{N}\sum_{i=1}^N\ell(F_{\omega} \circ (G_{\theta_1}(x_{i,1}),G_{\theta_2}(x_{i,2}), \\
%         & \cdots,G_{\theta_K}(x_{i,K})), y_{i}),
% \end{split}
% \end{equation}

% For simplicity, we denote $\mathcal{D}^{loc} = \{x_0^{loc}, y^{loc}\}$ as all labeled data in the active party $P_0$, and $\mathcal{D}^{ali} = \{x_0, x_1, \cdots, x_N, y\}$ as the aligned samples across all parties.

\subsection{Threat Model}\label{sec:threat-model}
% In this paper, we assume all active and passive parties are \textit{semi-honest} and do not collude with each other. That is, an adversary $P_i$ faithfully executes the training protocol but tries to infer the raw label $y$ (for adversarial passive parties) or the raw features $\{x_i\}_{i=1}^N$ (for adversarial active party) from passive parties' forward embedding $F_{\omega}(x_i)$. We consider two types of threat models to attack the label of the active party and private data of passive parties respectively: 1) The adversary (the active party) may infer the private data of passive parties according to the forward embedding, e.g., the model inversion attack \cite{he2019model}. 2) The adversary (the passive parties) conducts the model completion attack~\cite{fu2022label} to attack label privacy, we assume the passive parties know a few labeled samples. 
% Table~\ref{tab_threat_model} demonstrates the assumed threat model in our setting.
We assume all participating parties are \textit{semi-honest} and do not collude with each other. An adversary (i.e., the attacker) $P_k,k=0,\cdots,K$ faithfully executes the training protocol but may launch privacy attacks to infer the private data (features or labels) for other parties. 

% the raw label $y$ (for adversarial passive parties) or the raw features $\{x_i\}_{i=1}^N$ (for adversarial active party) from passive parties' forward embedding $F_{\omega}(x_i)$. 

We consider two types of threat models: \romannumeral1) The active party wants to reconstruct the private features of a passive party through the model inversion attack \cite{he2019model} or CAFE attack~\cite{jin2021cafe}. \romannumeral2) A passive party wants to infer the private labels of the active party through the model completion attack~\cite{fu2022label}. 
Table \ref{tab_threat_model} summarizes threat models considered in this work.
% We assume the passive parties know a few labeled samples. 

% The adversary conducts the model completion attack~\cite{fu2022label} to attack privacy. 
% Table~\ref{tab_threat_model} demonstrates the assumed threat model in our setting.
% In this work we consider a federated deep learning scenario, where $K$ participants collaboratively learn a multi-layered deep neural network model without exposing their private training data \cite{}. We assume certain participants or the server might be \textit{semi-honest} adversary, in the sense that they do not submit any malformed messages but may launch \textit{privacy attacks} on exchanged information from other participants.

\subsection{FedPass} \label{sec:fedpass}

This section illustrates two critical steps of the proposed FedPass, i) embedding private passports to adapt obfuscation; ii) generating passports randomly to improve privacy preserving capability.

\begin{figure}[!ht]
\centering
\includegraphics[width=0.48\textwidth]{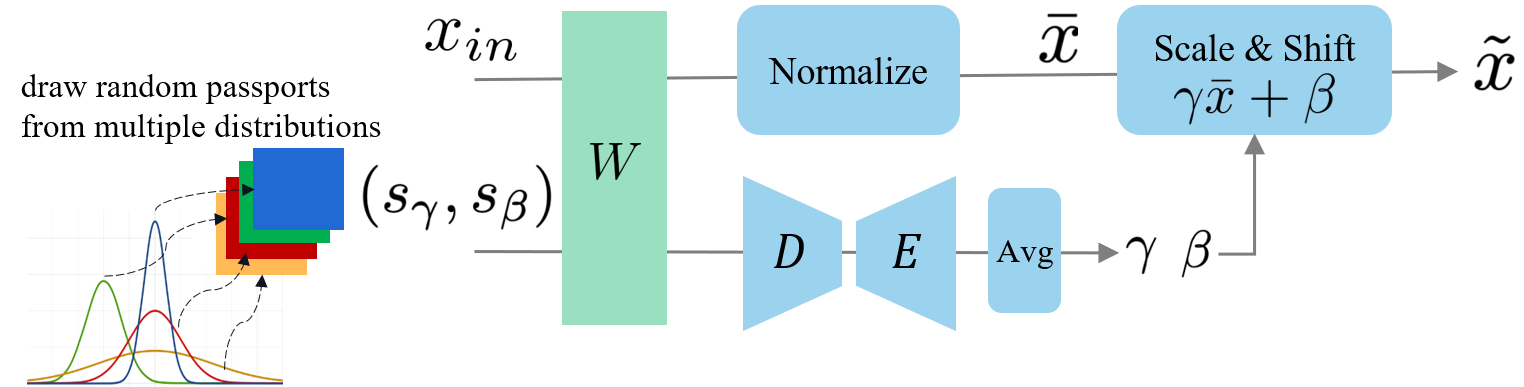}
 \vspace{-1em}
\caption{Adaptive obfuscation ($g(\cdot)$). We implemented $g(\cdot)$ by inserting a passport layer into a normal neural network layer.}
\label{fig:passport_layer}
\vspace{-3pt}
\end{figure}

% Inspired by \cite{thapa2020SplitFed}, we separate a deep neural network into several public and private layers as the %eq. \ref{eq:public_private} 
% shown in Figure \ref{fig:FDL-PP} and the following equation:
% \begin{equation} \label{eq:public_private}
%     \Phi(x;w,b) = \phi^{[1]} \circ \phi^{[2]} \circ \cdot \circ \phi^{[s]} \circ \psi^{[1]} \circ \psi^{[2]}  \cdots  \psi^{[t]}(x),
% \end{equation} where $s$, $t$ are, respectively, the numbers of public and private layers.
% Noticeably, parameters in private layers are \textit{kept secret to preserve data privacy} and parameters in public layers are \textit{shared without encryption}.

% Splitting models allows private model parameters to be kept secret, yet, training data can still be inferred from the public parameters as proved by Theorem \ref{thm:pst}.
% We therefore further enhance the privacy preserving capability by adopting a private passport layer approach proposed by \cite{fan2019rethinking,zhang2020passport} to protect DNN models. %Specifically, as shown in 
% %Figure \ref{fig:FDL-PP}, after 
\subsubsection{Embedding Private Passports}
In this work, we adopt the DNN passport technique proposed by \cite{fan2019rethinking,fan2021deepip} as an implementation for the adaptive obfuscation framework of FedPass. % by embedding private passports.
Specifically, the adaptive obfuscation is determined as follows:
%for each layer, embedding a private passport consisting of a fully connected autoencoder (Encoder $E$ and Decoder $D$) and average pooling layer is used to obtain the crucial normalization parameters $\gamma, \beta$. Taking the single layer weights $w$ as one example (Algo. \ref{alg:AO}):
%The composition of convolution layer, passport layer and autoencoder can be succinctly expressed by eq. \ref{eq:pst1}:
%Concretely, we add a passport layer after the convolution layer. Similar to the batch normalization layer, the output of the passport layer is scale factor $\gamma$ and bias shift term $\beta$, which are dependent on both the convolution kernels $\omega_p$ and the designated passport $P$. In order to increase the complexity of the passport layer and keep the model performance, we add a fully connected autoencoder (Encoder $E$ and Decoder $D$) into the passport layer. Moreover, the details of passport layer based on Equation \ref{eq:pst1} and \ref{eq:pst2}.
\begin{equation}\label{eq:pst1}
\begin{split}
        g_{W}(x_{in}, s) = &
        %\gamma^l \mathbf{X}_p + \beta =
        \gamma(Wx_{in}) + \beta, \\
         \gamma=&\text{Avg}\Big( D\big(E(Ws_\gamma)\big)\Big)\\
        \beta = &\text{Avg}\Big( D\big(E(Ws_\beta)\big)\Big)
\end{split}
\end{equation}
% where $W$ denotes the model parameters of the neural network, $x$ is the input fed to a convolution or linear layer, $\gamma$ and $\beta$ are the scale factor and the bias term. Note that the determination of the crucial parameters $\gamma$ and $\beta$ involves the model parameter $W$ with private passports $s_\gamma$ and $s_\beta$, followed by a autoencoder (Encoder $E$ and Decoder $D$) 
where $W$ denotes the model parameters of the neural network layer for inserting passports, $x_{in}$ is the input fed to $W$, $\gamma$ and $\beta$ are the scale factor and the bias term. Note that the determination of the crucial parameters $\gamma$ and $\beta$ involves the model parameter $W$ with private passports $s_\gamma$ and $s_\beta$, followed by a autoencoder (Encoder $E$ and Decoder $D$ with parameters $W'$) 
% with parameters $W'$
and a average pooling operation Avg($\cdot$). Learning adaptive obfuscation formulated in Eq.(\ref{eq:pst1}) brings about two desired properties: 
\begin{itemize}
    \item Passport-based parameters $\gamma$ and $\beta$ provide strong privacy guarantee (refer to Sect.~\ref{sec:analysis}): without knowing passports it is exponentially hard for attacker to infer layer input $x_{in}$ from layer output $g_{W}(x_{in}, s)$, because attacker have no access to $\gamma$ and $\beta$ (see Theorem \ref{thm:thm1}).
    \item Learning adaptive obfuscation formulated in Eq. (\ref{eq:pst1}) optimizes the model parameter $W$ through three backpropagation paths via $\beta, \gamma, W$, respectively, which helps preserve model performance. This is essentially equivalent to adapting the obfuscation (parameterized by $\gamma$ and $\beta$) to the model parameter $W$ (more explanations in Appendix C); This adaptive obfuscation scheme offers superior model performance compared to fixed obfuscation schemes (see Sect. \ref{sec:exp}). 

\end{itemize}

% allows the passive and active party to learn the obfuscation of passport and private feature, which blocks the adversary to infer private data and label. Moreover, in order to prevent adversary guessing the passport, we sample private passports from the multiple Gaussian distributions as:
% \begin{equation}
%     s \sim \calN(0, \sigma^2), \quad
%     \sigma \in [-C, C],
% \end{equation}
% where $C$ is the constant.
% The analysis in Sect. \ref{sec:analysis} demonstrates the exponential hardness for adversaries to infer the passport even they knows the model parameters.

The training procedure of FedPass in Vertical Federated Learning is illustrated as follows (described in Algorithm \ref{alg:aof-vfl}):
\begin{enumerate}
    % \item First, in the forward step (line 3-9 of Algo. \ref{alg:aof-vfl}), each passive party $j$ applies the adaptive obfuscation to its private features with its private passports $s^{p_j}$ and then send the forward embedding $H_j$ to the active party;
    % \item Second, in the Backward step (line 10-19 of Algo. \ref{alg:aof-vfl}), the active party sums over all $H_j, j \in \{1,\dots, K\}$ as $H$, and applies the adaptive obfuscation to $H$ with its private passports $s^a$, generating $\Tilde{H}$. Then, the active party computes the loss $\Tilde{\ell}$ and updates its model through back-preparation. Next, the active party computes gradients $\nabla_H_{j}\Tilde{\ell}$ for each passive party $j$ and sends $\nabla_H_{j}\Tilde{\ell}$ to passive party $j$
    % \item Last, all passive parties update their models via back-propagate $\nabla_H\Tilde{\ell}$ (line 20-22 of Algo. \ref{alg:aof-vfl}). 
    \item Each passive party $k$ applies the adaptive obfuscation to its private features with its private passports $s^{p_k}$ and then sends the forward embedding $H_k$ to the active party (line 3-9 of Algo. \ref{alg:aof-vfl});
    \item The active party sums over all $H_k, k \in \{1,\dots, K\}$ as $H$, and applies the adaptive obfuscation to $H$ with its private passports $s^a$, generating $\Tilde{H}$. Then, the active party computes the loss $\Tilde{\ell}$ and updates its model through back-propagation. Next, the active party computes gradients $\nabla_{H_{k}}\Tilde{\ell}$ for each passive party $k$ and sends $\nabla_{H_{k}}\Tilde{\ell}$ to passive party $k$ (line 10-19 of Algo. \ref{alg:aof-vfl});
    \item Each passive party $k$ updates its model $\theta_k $ according to $\nabla_{H_k}\Tilde{\ell}$ (line 20-22 of Algo. \ref{alg:aof-vfl}). 
\end{enumerate} 

The three steps iterate until the performance of the joint model does not improve.

\subsubsection{Random Passport Generation}
% The passport is important in our adaptive obfuscation framework. We sample multiple passports following the Gaussian distribution. However, if the passport is drawn from single distributions, the adversary may figure out the passport by trial and error and to further infer the private feature and label (see details in Appendix). Therefore, we set private passports in each channel to be from the different Gaussian distributions (Figure \ref{fig:passport_layer}). Specifically, when passports are embedded on the $c$ channel convolution layer,  for the channel $i\in[c]$, the passport $s^i$ follows:
% The passport is important in our adaptive obfuscation framework. 
How passports are generated is crucial in protecting data privacy. 
%First, different passports are randomly generated from a Gaussian distribution For different data samples. Second, 
%for $c$ different channels of neural networks the mean of such a Gaussian distribution are also randomly selected from a uniform distribution i.e.  
%For each channel Random passports drawn from different Gaussian distribution (Figure \ref{fig:passport_layer}). 
% Because if all passports are drawn from a single distribution, the attacker can figure out private passports through trial and error and further infer private features and labels through privacy attacks leveraging the disclosed passports (we prove in Theorem \ref{thm2} that labels are unsafe if the embedded passports are similar.). 
Specifically, when the passports are embedded in a convolution layer or linear layer with $c$ channels\footnote{For the convolution layer, the passport $s\in \RR^{c\times h_1 \times h_2}$, where $c$ is channel number, $h_1$ and $h_2$ are height and width; for the linear layer, $s\in \RR^{ c\times h_1}$, where $c$ is channel number, $h_1$ is height.}, 
for each channel $j\in[c]$, 
the passport $s{(j)}$ (the $j_{th}$ element of vector $s$) is randomly generated as follows:
\begin{equation}\label{eq:sample-pst}
    s{(j)} \sim \calN(\mu_j, \sigma^2), \quad
    \mu_j \in \calU(-N, 0),
\end{equation}
where all $\mu_j, j=1,\cdots, c$ are different from each other, $\sigma^2$ is the variance of Gaussian distribution and $N$ is the \textit{passport range},  which are two crucial parameters of FedPass. The strong privacy-preserving capabilities rooted in such a random passport generation strategy are justified by theoretical analysis in  Theorems \ref{thm:thm1} and \ref{thm2} as well as experiment results in Sect. \ref{sec:exp}. 
% In Theorem \ref{thm:thm1}, we prove that it is hard for attackers to infer features even if they know the way of embedding passports.

% For attackers who want to guess the passport through brute force, even if they know the range of $\mu_i$, the probability that they guess all passports correctly is $\frac{1}{\tbinom{N+1}{c}} < 2^{-c}$ as $N>2c$, which is exponentially small w.r.t. $c$. 

\begin{algorithm}[!ht]\vspace{-3pt}
	\caption{FedPass}
	\begin{algorithmic}[1]
	   \Statex \textbf{Input:} Communication rounds $T$, Passive parties number $K$, learning rate $\eta$, batch size $b$, the passport range and variance $\{N^a, \sigma^a\}$ and $\{N^{p_k}, \sigma^{p_k}\}$ for the active party and passive party $k$ respectively, the feature dataset $\calD_k = (x_{k,1},\cdots, x_{k,n_k})$ owned by passive party $k$, the aligned label $y = (y_1, \cdots, y_{n_0})$ owned by the active party.

        \Statex \textbf{Output:} Model parameters $\theta_1, \cdots, \theta_K, \omega$ \vspace{4pt}
	    
	\State Initialize model weights $\theta_1, \cdots, \theta_K, \omega$.
     \For{$t$ in communication round $T$} \vspace{2pt}
        % \State \gray{$\triangleright$ \textit{Forward step in $K$ passive parties:}}
        \State \gray{$\triangleright$ \textit{Passive parties perform:}}
         \For{Passive Party $k$ in $\{1,\dots,K\}$}: 
         \State Sample a batch $B_k=(x_{k,1},\cdots, x_{k,b})$ from the dataset $\calD_k$
         \State Sample the passport tuple $s^{p_k} = (s^{p_k}_\gamma, s^{p_k}_\beta)$ according to Eq. \eqref{eq:sample-pst} and $N^{p_k}, \sigma^{p_k}$
          \State Compute $\Tilde{B_k} = g_{\theta_k}( B_k,s^{p_k})$%\triangleq \text{AO}(\omega, s^{p_j}, B_j)$
\State   Compute  $H_{k} \gets G_{\theta_k}(\Tilde{B_k})$
\State  Send  $H_{k}$  to the active party
\EndFor
% \State \gray{$\triangleright$ \textit{Backward step in the active party:}}
\State \gray{$\triangleright$ \textit{The active party performs:}}
\State Obtain the label $y$ matched with $\{B_k\}_{k=1}^K$.
\State $H = \sum_{k=1}^KH_k$ 
\State Sample the passport tuple $s^a=(s^{a}_\gamma, s^{a}_\beta)$ via Eq. \eqref{eq:sample-pst} and $N^a, \sigma^a$ 
\State Compute $\Tilde{H} = g_{\omega}(H,s^a)$
%\triangleq \text{AO}(\omega, s^{a}, H)$ 
\State Sample a batch of the label $Y = (y_1,\cdots, y_b)$
\State Compute cross-entropy loss:
$\Tilde{\ell} = \ell(F_{\omega}(\Tilde{H}),Y)$
\State Update the active model as: $\omega = \omega - \eta \nabla_\omega\Tilde{\ell}$
   \For {$k$ in $\{1,\dots,K\}$}:   
   \State Compute and send  $\nabla_{{H_{k}}}\tilde{\ell}$ to each passive party $k$
   \EndFor		
% \State \gray{$\triangleright$ \textit{Finish update:}}
\State \gray{$\triangleright$ \textit{Passive parties perform:}}
   \For{Passive Party $k \in \{1,\dots,K\}$}: 
      \State  Update $\theta_k$ by $\theta_k = \theta_k - \eta [\nabla_{H_k}\tilde{\ell}] [\nabla_{\theta_k}H_k]$
\EndFor
\EndFor
\Return $\theta_1, \cdots, \theta_K, \omega$
	\end{algorithmic}\label{alg:aof-vfl}
\end{algorithm}

\begin{algorithm}[!ht]
\caption{Adaptive Obfuscation ($g(\cdot)$)}
\begin{algorithmic}[1]
  \Statex \textbf{Input:} Model parameters $W$ of the neural network layer for inserting passports; the input $x_{in}$ to which the adaptive obfuscation applies to; passport keys  $s = (s_\gamma, s_\beta)$.
  \Statex \textbf{Output:} The obfuscated version of the input.
  % \For{layer in $l_{pass}$}
%   \State Compute $
%          \gamma=\text{Avg}\Big( D\big(E(w* s_\gamma^{p_j})\big)\Big)$
% \State Compute $ \beta = \text{Avg}\Big( D\big(E(w* s_\beta)\big)\Big)$
  \State Compute $
         \gamma=\text{Avg}\left( D\big(E(W* s_\gamma)\big)\right)$
\State Compute $ \beta = \text{Avg}\left( D\big(E(W* s_\beta)\big)\right)$ \\
% \State Compute $g(W, s, x_{in}) = \gamma(W* x_{in}) + \beta$ \\
  % \EndFor
\Return $\gamma(W* x_{in}) + \beta$
%   \For{ layer in $N^{a}$}
%   \State Compute $
%          \gamma=\text{Avg}\Big( D\big(E(\omega* s_\gamma^{p_j})\big)\Big)$
% \State Compute $ \beta = \text{Avg}\Big( D\big(E(\omega* s_\beta)\big)\Big)$
% \State Compute $\text{AO}(\omega, s^{p_j}, B_j) = \gamma(\omega* x_{in}) + \beta$
%   \EndFor
%\Ensure
	\end{algorithmic}
	\label{alg:AO}
\end{algorithm}

\subsection{Privacy-Preserving Capability of FedPass}
\label{sec:analysis}
We investigate the privacy-preserving capability of FedPass against feature reconstruction attack and label inference attack. 
% Argument for security consists of three halves: (1) To recover significant information about $x$ from the forward embedding, computationally limited eavesdroppers/attackers break AOF is difficult  (Sect. \ref{}). (2) For the passive party, ...  (Sect. \ref{}); 3) It is hard for the adversary to infer the private passport.\\
% \textbf{Attack scenario:} In each epoch, the active party observe the forward embedding of the passive party, the passive party knows backward gradients from the active party and train their passive models.
Note that we conduct the privacy analysis with linear regression models, for the sake of brevity. Proofs are deferred to Appendix D.
% Nevertheless the analysis can be applied to neural network models [citations? to support the argument]
\begin{definition} \label{def:SplitFed}
Define the forward function of the passive model $G$ and the active model $F$: 
\begin{itemize}
    \item For passive layer: $H = G(x) =  W_p s_\gamma^p \cdot W_p x + W_p s_\beta^p$.
    \item For active layer: $y = F(H) =  W_a s_\gamma^a \cdot W_a  H + W_a s_\beta^a$.
\end{itemize}
where $W_p$, $W_a$ are 2D matrices of the passive and active models; $\cdot$ denotes the inner product, $ s_\gamma^p,  s_\beta^p$ are passports of the passive party, $ s_\gamma^a,  s_\beta^a$ are passports of the active party.
% $D_\gamma$, $\hat{D_\gamma}$ are diagonal matrix of scaling weights of batch normalization or passport layer; $\mu$, $O$ are the outputs of public and private layer; $\cdot$ denotes matrix product.\\
% \textit{Remark}: Any convolution layers can be converted into a fully-connected layer by simply stacking together spatially shifted convolution kernels. (\cite{ma2017equivalence}). In addition, the scaling factor $\gamma$ of batch normalization or passport layer could be explained by a diagonal matrix in the forward function. Consequently, we use $a(D_\gamma \cdot w \cdot x + b)$ as forward function of private layer.
\end{definition}

\subsubsection{Hardness of feature Restoration with FedPass}
Consider the white-box Model Inversion (MI) attack (i.e., model inversion step in CAFE \cite{jin2021cafe,he2019model}) that aims to inverse the model $W_p$ to recover features $\hat{x}$ approximating original features $x$. In this case, the attacker (i.e., the active party) knows the passive model parameters $W_p$, forward embedding $H$ and the way of embedding passport, but does not know the passport.

\begin{theorem}\label{thm:thm1}
    Suppose the passive party protects features $x$ by inserting the $s_{\beta}^p$. The probability of recovering features by the attacker via white-box MI attack is at most $\frac{\pi^{m/2}\epsilon^m}{\Gamma(1+m/2)N^m}$ such that the recovering error is less than $\epsilon$, i.e., $\|x-\hat{x}\|_2\leq \epsilon$,
\end{theorem}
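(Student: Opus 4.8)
The plan is to reduce the feature-recovery event to a passport-guessing event and then bound the latter by a volume ratio. First I would use Definition~\ref{def:SplitFed} to make the dependence of the embedding on the unknown passport explicit. In the linear regime the observed embedding has the form $H = W_p(x + s_\beta^p)$ (absorbing the scale term $s_\gamma^p$), so that $W_p^{-1}H = x + s_\beta^p$ is a quantity the attacker can compute from its white-box knowledge of $W_p$ and $H$. Hence any reconstruction $\hat x$ corresponds bijectively to a passport guess $\hat s_\beta^p := W_p^{-1}H - \hat x$, and the map is an isometry: $\|x - \hat x\|_2 = \|s_\beta^p - \hat s_\beta^p\|_2$. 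Therefore the event $\{\|x-\hat x\|_2 \le \epsilon\}$ is exactly the event that the true random passport $s_\beta^p$ falls inside the Euclidean ball $B(\hat s_\beta^p,\epsilon)$ of radius $\epsilon$ centered at the attacker's guess.

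Next I would control the probability that $s_\beta^p \in B(\hat s_\beta^p,\epsilon)$ using the sampling law in Eq.~\eqref{eq:sample-pst}. Each coordinate $s_\beta^p(j)$ is Gaussian with mean $\mu_j$ drawn uniformly on $(-N,0)$, so its marginal density is the mixture $p(t) = \frac{1}{N}\int_{-N}^{0}\phi_\sigma(t-\mu)\,d\mu$, i.e.\ a uniform density of height $1/N$ convolved with the Gaussian $\phi_\sigma$; since convolution with a probability density cannot raise the supremum of a density, $p(t)\le 1/N$ pointwise, uniformly in $\sigma$. Because the coordinates are independent, the joint density of $s_\beta^p \in \RR^{m}$ is bounded by $1/N^{m}$ everywhere. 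Multiplying this uniform density bound by the Lebesgue volume of the $m$-dimensional ball $B(\hat s_\beta^p,\epsilon)$, namely $\pi^{m/2}\epsilon^{m}/\Gamma(1+m/2)$, yields $\Pr[\,s_\beta^p \in B(\hat s_\beta^p,\epsilon)\,] \le \frac{\pi^{m/2}\epsilon^{m}}{\Gamma(1+m/2)N^{m}}$, which is precisely the claimed bound.

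I expect the main obstacle to be the reduction step rather than the volume computation: one must argue that no attacker strategy does better than committing to a single passport guess. The clean way to see this is Bayesian---place an improper uniform prior on $x$ so that, conditioned on the observable $W_p^{-1}H = x + s_\beta^p$, the posterior density of $x$ equals $p_{s_\beta^p}(W_p^{-1}H - x)$, whose supremum is again at most $1/N^{m}$; then for any estimator $\hat x(H)$ the posterior mass inside $B(\hat x,\epsilon)$ is bounded by density-times-volume, so the success probability obeys the same bound uniformly over attacker strategies. Two points deserve care: (i) the treatment of the scale factor $\gamma = W_p s_\gamma^p$---if it is also kept secret the map $s_\beta^p \mapsto x$ is only affine after rescaling, so the exact isometry must be weakened to a Lipschitz estimate and the constant adjusted accordingly; and (ii) verifying that the independence and distinctness conditions on the $\mu_j$ (a measure-zero constraint) do not affect the product density bound $1/N^{m}$.
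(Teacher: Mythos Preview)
Your proposal is correct and follows the same skeleton as the paper: establish the isometry $\|x-\hat x\|_2=\|s_\beta^p-\hat s_\beta^p\|_2$ (the paper does this via its Lemma~1 in the appendix, using $W_p^\dagger$), then bound the probability that the true passport lands in an $\epsilon$-ball by a volume ratio with denominator $N^m$.

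Where you differ is in rigor, and both differences work in your favor. First, the paper's proof simply treats the passport space as the cube $(-N,0)^m$ with implicit uniform measure and takes a raw volume ratio, effectively ignoring the Gaussian perturbation $\sigma$ in Eq.~\eqref{eq:sample-pst}; your argument that the uniform--Gaussian convolution has pointwise density at most $1/N$, hence joint density at most $1/N^m$, is the clean way to make the bound hold for the actual sampling law and for every $\sigma>0$. Second, the paper explicitly restricts (in a remark following its proof) to attackers who ``guess the private passport randomly,'' whereas your Bayesian/posterior-supremum step yields the bound uniformly over all estimators $\hat x(H)$. So your route is the same reduction-plus-volume argument, but it closes two gaps the paper leaves open; your caveats about the secret scale $\gamma$ and the measure-zero distinctness constraint on the $\mu_j$ are appropriate and do not affect the conclusion.
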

where $m$ denotes the dimension of the passport via flattening, $N$ denotes the passport range formulated in Eq. \eqref{eq:sample-pst} and $\Gamma(\cdot)$ denotes the Gamma distribution.
% \begin{proof}
%     According to Lemma \ref{lem1}, the attacker aims to recover the feature $\hat{x}$ within the $\epsilon$ error from the original feature $x$, that is, the guessed passport needs to satisfy: 
%     \begin{equation}
%         \|s_{\beta'}^p - s_\beta^p  \|_2 \leq \epsilon
%     \end{equation}
% Therefore, the area of inferred passport of attackers is sphere with the center $s_{\beta}^p$ and radius $\epsilon$. And the volumes of this area is at most $\frac{\pi^{m/2}\epsilon^m}{\Gamma(1+m/2)}$, where $F$ represent the Gamma distribution and $m$ is dimension of the passport. Consequently, the probability of attackers to successfully recover the feature within $\epsilon$ error is:
% \begin{equation*}
% \begin{split}
%         p \geq \frac{\pi^{m/2}\epsilon^m}{\Gamma(1+m/2)N^m},
%         \end{split}
% \end{equation*}
% where the whole space is $(-N,0)^m$ with volume $N^m$.
% \end{proof}
Theorem \ref{thm:thm1} demonstrates that the attacker's probability of recovering features within error $\epsilon$ is exponentially small in the dimension of passport size $m$. The successful recovering probability is inversely proportional to the passport range $N$ to the power of $m$.
% Noted that the $C$ increases as the absolute value of the passport $s_\gamma^p$ becomes large such that the recovered feature is far away from the original feature. However, the model performance is influenced when $|s_\gamma^p|$ increases (We analyze the change of privacy and model performance about $|s_\gamma^p|$ in Appendix B).
\subsubsection{Hardness of Label Recovery with FedPass}
Consider the passive model competition attack \cite{fu2022label} that aims to recover labels owned by the active party. The attacker (i.e., the passive party) leverages a small auxiliary labeled dataset $\{x_i, y_i\}_{i=1}^{n_a}$ belonging to the original training data to train the attack model $W_{att}$, and then infer labels for the test data. Note that the attacker knows the trained passive model $G$ and forward embedding $H_i = G(x_i)$. Therefore, they optimizes the attack model $W_{att}$ by minimizing $\ell = \sum_{i=1}^{n_a}\|W_{att}H_i-y_i\|_2$.
% \begin{assumption}
%     The main task algorithm of VFL on the auxiliary labeled dataset is convergent i.e., the training error goes to zero. 
% \end{assumption}
\begin{assumption}\label{assum1}
Suppose the original main algorithm of VFL is convergent. For the attack model, we assume the error of the optimized attack model $W^*_{att}$ on test data $\tilde{\ell}_t$ is larger than that of the auxiliary labeled dataset $\Tilde{\ell}_a$.
\end{assumption}
% \begin{rmk}
% The error on test data $\Tilde{\ell}_t$ is usually higher than on the auxiliary labeled dataset $\Tilde{\ell}_a$ because the former is computed on an unknown dataset that the attack model has not seen before.
% \end{rmk}
% they wants to learn a mapping from $G(x_i)=H_i$ to $y_i$, $i=1, \cdots, n$ by minimizing the regression loss $\Tilde{\ell} = \sum_{i=1}^n\|G(x_i)-y_i\|_2$. Assume the the regression loss of FedPass is zero, then $y_i = F(H_i) = W_a s_{\gamma,i}^a \cdot W_a  H_i + s_{\beta, i}^aW_a$.
\begin{theorem} \label{thm2}
Suppose the active party protect $y$ by embedding $s^a_\gamma$, and adversaries aim to recover labels on the test data with the error $\Tilde{\ell}_t$ satisfying:
    \begin{equation}
        \Tilde{\ell}_t \geq \min_{W_{att}} \sum_{i=1}^{n_a}\|(W_{att}- T_i)H_i\|_2,
    \end{equation}
    where $T_i =diag(W_as_{\gamma,i}^a) W_a$ and $s_{\gamma,i}^a$ is the passport for the label $y_i$ embedded in the active model. Moreover, if $H_{i_1} = H_{i_2} = H$ for any $1\leq i_1,i_2 \leq n_a$, then
    \begin{equation} \label{eq:protecty}
        \Tilde{\ell}_t \geq \frac{1}{(n_a-1)}\sum_{1\leq i_1<i_2\leq n_a}\|(T_{i_1}-T_{i_2})H\|_2
    \end{equation}
\end{theorem}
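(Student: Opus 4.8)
The plan is to reduce both inequalities to elementary facts about the attacker's optimal loss on the auxiliary set. First I would make explicit how a protected label is generated. Under Definition~\ref{def:SplitFed}, with only the scale passport $s^a_\gamma$ active, the active layer multiplies the scale factor $W_a s^a_{\gamma,i}$ coordinate-wise against $W_a H_i$, so that $y_i = \mathrm{diag}(W_a s^a_{\gamma,i})\, W_a H_i = T_i H_i$ with $T_i = \mathrm{diag}(W_a s^a_{\gamma,i}) W_a$. Substituting $y_i = T_i H_i$ into the attacker's objective turns the optimal auxiliary loss $\tilde\ell_a = \min_{W_{att}} \sum_{i=1}^{n_a}\|W_{att} H_i - y_i\|_2$ into exactly $\min_{W_{att}} \sum_{i=1}^{n_a}\|(W_{att}-T_i)H_i\|_2$. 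Assumption~\ref{assum1}, namely $\tilde\ell_t \ge \tilde\ell_a$, then immediately yields the first claimed bound.

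For the second inequality I would specialize to $H_i \equiv H$ and observe that, for a fixed nonzero $H$, the vector $W_{att} H$ ranges over the whole label space as $W_{att}$ varies (for instance $W_{att} = u H^\top/\|H\|_2^2$ realizes any target $u$). Hence the inner minimization becomes an unconstrained geometric-median problem: writing $v_i = T_i H$ and $u = W_{att} H$, the right-hand side of the first bound equals $\min_u \sum_{i=1}^{n_a}\|u - v_i\|_2$.

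The key step is then a pairwise triangle-inequality argument. For every pair $i_1 < i_2$ and every candidate $u$ we have $\|v_{i_1} - v_{i_2}\|_2 \le \|u - v_{i_1}\|_2 + \|u - v_{i_2}\|_2$. Summing over all $\binom{n_a}{2}$ pairs and counting that each term $\|u - v_i\|_2$ appears exactly $n_a-1$ times (once per partner index) gives $\sum_{i_1<i_2}\|v_{i_1}-v_{i_2}\|_2 \le (n_a-1)\sum_{i=1}^{n_a}\|u-v_i\|_2$. Since this holds for every $u$ it holds at the minimizer; dividing by $n_a-1$ and chaining with $\tilde\ell_t \ge \min_u \sum_i\|u-v_i\|_2$ delivers Eq.~\eqref{eq:protecty}.

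I expect the only delicate point to be justifying that the inner minimization is genuinely unconstrained, i.e.\ the surjectivity of $W_{att} \mapsto W_{att} H$ for nonzero $H$; the remainder is substitution and a counting argument. It is also worth flagging that the bound is only meaningful when the per-label transforms $T_i$ are distinct, since identical $T_i$ collapse the pairwise sum to zero --- this distinctness is exactly what the random passport generation in Eq.~\eqref{eq:sample-pst} is designed to ensure.
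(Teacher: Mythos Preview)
Your proposal is correct and follows essentially the same route as the paper: substitute $y_i = T_i H_i$ and invoke Assumption~\ref{assum1} for the first bound, then apply the pairwise triangle inequality and count appearances to get Eq.~\eqref{eq:protecty}. The one cosmetic difference is that the paper never passes through the surjectivity of $W_{att}\mapsto W_{att}H$; it simply observes that the triangle-inequality lower bound $\sum_{i_1<i_2}\|(T_{i_1}-T_{i_2})H\|_2 \le (n_a-1)\sum_i\|(W_{att}-T_i)H\|_2$ holds for \emph{every} $W_{att}$ and hence for the minimizer, so your geometric-median reduction is valid but unnecessary.
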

\begin{prop}\label{prop1}
Since passports are randomly generated and $W_a$ and $H$ are fixed, if the $W_a = I, H=\Vec{1}$, then it follows that:
\begin{equation}
    \Tilde{\ell}_t \geq \frac{1}{(n_a-1)}\sum_{1\leq i_1<i_2\leq n_a}\|s_{\gamma,i_1}^a-s_{\gamma,i_2}^a\|_2)
\end{equation}
\end{prop}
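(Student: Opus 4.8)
The plan is to treat this statement as an immediate specialization of the bound in Eq.~\eqref{eq:protecty} of Theorem~\ref{thm2}, not as a fresh estimate. The hypothesis already fixes $H_{i_1}=H_{i_2}=H=\Vec{1}$, so the premise needed to invoke Eq.~\eqref{eq:protecty} is in force, and the entire task collapses to rewriting the generic summand $\|(T_{i_1}-T_{i_2})H\|_2$ under the two extra assumptions $W_a=I$ and $H=\Vec{1}$. Thus the argument is a purely algebraic substitution, carried out termwise inside the sum.

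First I would evaluate $T_i=\mathrm{diag}(W_a s_{\gamma,i}^a)\,W_a$ at $W_a=I$. Then $W_a s_{\gamma,i}^a=s_{\gamma,i}^a$, so $T_i=\mathrm{diag}(s_{\gamma,i}^a)$ is simply the diagonal matrix whose diagonal is the per-sample passport $s_{\gamma,i}^a$. Because the map $v\mapsto\mathrm{diag}(v)$ is linear, the difference stays diagonal: $T_{i_1}-T_{i_2}=\mathrm{diag}\!\left(s_{\gamma,i_1}^a-s_{\gamma,i_2}^a\right)$.

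Next I would apply this diagonal matrix to $H=\Vec{1}$. The one elementary identity I rely on is that $\mathrm{diag}(v)\,\Vec{1}=v$ for every vector $v$, since multiplying a diagonal matrix by the all-ones vector merely reads off its diagonal entries. Hence $(T_{i_1}-T_{i_2})H=s_{\gamma,i_1}^a-s_{\gamma,i_2}^a$, so $\|(T_{i_1}-T_{i_2})H\|_2=\|s_{\gamma,i_1}^a-s_{\gamma,i_2}^a\|_2$. Substituting this equality into each term of the right-hand side of Eq.~\eqref{eq:protecty} reproduces exactly the claimed lower bound on $\Tilde{\ell}_t$, completing the proof.

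The sole thing to watch — and it is the only \emph{obstacle}, a minor bookkeeping one rather than an analytic one — is keeping the dimensions consistent, so that the identity $W_a=I$, the operator $\mathrm{diag}(\cdot)$, and the all-ones vector $\Vec{1}$ all live in the passport dimension $m$; once this is pinned down the chain of equalities is forced. I would close by noting that the resulting expression makes the protection mechanism transparent: the label-recovery error is bounded below by the sum of pairwise passport separations $\|s_{\gamma,i_1}^a-s_{\gamma,i_2}^a\|_2$ (scaled by $1/(n_a-1)$), a quantity the random generation rule in Eq.~\eqref{eq:sample-pst} keeps strictly positive.
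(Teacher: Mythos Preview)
Your proposal is correct and follows exactly the paper's approach: invoke Eq.~\eqref{eq:protecty} from Theorem~\ref{thm2}, then specialize $T_i=\mathrm{diag}(W_a s_{\gamma,i}^a)W_a$ at $W_a=I$ and apply the resulting diagonal matrix to $H=\Vec{1}$ to obtain $(T_{i_1}-T_{i_2})H=s_{\gamma,i_1}^a-s_{\gamma,i_2}^a$. The paper's own proof is a one-line version of this same substitution; you have simply spelled out the intermediate steps more carefully.
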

% \begin{proof}
% For only inserting the passport $s^a_\gamma$ of active party, according to Assumption \ref{assum1-app}, we have
% \begin{equation}
%     y_i=   W_a s_\gamma^a * W_a  H_i
% \end{equation}
% Moreover, the attackers aims to optimize
% \begin{equation*}
% \begin{split}
%        &\min_W\sum_{i=1}^{n_a}\|WH_i-y_i\|_2  \\
%        = &\min_W\sum_{i=1}^{n_a}\|WH_i - W_a s_\gamma^a * W_a  H_i\|_2 \\
%        =&\min_W\sum_{i=1}^{n_a}\|(W-T_i)H_i\|_2,
% \end{split}   
% \end{equation*}
% where $T_i =diag(W_as_{\gamma,i}^a) W_a$. Therefore, based on Assumption \ref{assum1-app}, $\Tilde{\ell}_t\geq \min_W\sum_{i=1}^{n_a}\|(W-T_i)H_i\|_2$. Moreover, if $H_i = H_j = H$ for any $i,j \in [n_a]$, then
% \begin{equation*}
%     \begin{split}
%         \Tilde{\ell}_t& \geq \min_W\sum_{i=1}^{n_a}\|(W-T_i)H_i\|_2 \\
%         &= \min_W \frac{1}{2(n-1)}\sum_{1\leq i,j\leq n_a}(\|(W-T_i)H\|_2+ \|(W-T_j)H\|_2) \\
%         &\geq \min_W \frac{1}{2(n-1)}\sum_{1\leq i,j\leq n_a}(\|(T_j-T_i)H\|_2) \\
%        & =  \frac{1}{(n-1)}\sum_{1\leq i<j\leq n_a}(\|(T_j-T_i)H\|_2)
%     \end{split}
% \end{equation*}
% \end{proof}

Theorem \ref{thm2} and Proposition \ref{prop1} show that the label recovery error $\Tilde{\ell}_t$ has a lower bound,
which deserves further explanations. First, when passports are randomly generated for all data, i.e., $s_{\gamma,i_1}^a \neq s_{\gamma,i_2}^a$, then a non-zero label recovery error is guaranteed no matter how adversaries attempt to minimize it. The recovery error thus acts as a protective random noise imposed on true labels. Second, the magnitude of the recovery error monotonically increases with the variance $\sigma^2$ of the Gaussian distribution passports sample from (in Eq. \eqref{eq:sample-pst}), which is a crucial parameter to control privacy-preserving capability (see Experiment results in Appendix B) are in accordance with Theorem \ref{thm2}. 
Third, it is worth noting that the lower bound is based on the training error of the auxiliary data used by adversaries to launch PMC attacks. Given possible discrepancies between the auxiliary data and private labels, e.g., in terms of distributions and the number of dataset samples, the actual recovery error of private labels can be much larger than the lower bound. Again, this strong protection is observed in experiments (see Sect. \ref{sec:exp}). 

% This lower bound deserves three explainations. First, when passports are randomly generated for all data i.e.si≠sj then a non-zero label recovery error is gurantteed no matter how adversaries attempt to to minimize it. The recovery error can be viewed as a random noise imposed on true labels.
% Second, the magnitude of the recovery error monotonically increases with the passport range  N, which is a crucial parameter to control privacy-preserving capability. Experiment results in Appendix ? are in accordance with Theorem 2. 
% Third, it is worth noting that the lower bound is based on the training error of the auxiliary data used by adversaries to launch PMC attacks. Given possible discrepancies between the auxiliary data and private labels owned by the active party, e.g. in terms of distributions and the number of dataset samples, the actual recovery error of private labels can be much larger than the lower bound. Again, this strong protection in observed in experiments (see Sect. ?). 

% which reflects the passive party cannot learn the active model $W$ well in the existence of passport. Concretely, when passports $s_{\gamma,i}^a$ embedding on \textcolor{red}{all the data} are the same,  $T_i=T_j$ (for any $i,j$) leading the lower bound to be zero. We embed different passports for different batch data (see line 13 in Algo. \ref{alg:aof-vfl}). Furthermore, when the passport range becomes large, the $\sum_{1\leq i<j\leq n_a}(\|(T_j-T_i)H\|_2)$ always increases, meaning it is hard to recover labels.
%\input{Section/Method}
\section{Experiment}\label{experiments}
% This section describes the systematic empirical studies of the proposed method against existing solutions. We conduct experiments on two datasets: MNIST and CIFAR10. 
We present empirical studies of FedPass in defending against feature reconstruction attack and label inference attack.

\subsection{Experiment setting}

\subsubsection{Models \& Datasets \& VFL Setting} 
We conduct experiments on three datasets:
\textit{MNIST} \cite{lecun2010mnist}, \textit{CIFAR10} \cite{krizhevsky2014cifar} and ModelNet \cite{wu20153d}. We adopt LeNet \cite{lecun1998gradient} for conducting experiments on MNIST and adopt \textit{AlexNet} \cite{NIPS2012_c399862d} and \textit{ResNet18} \cite{he2016deep} on CIFAR10. For each dataset, passive party only provides private data while active party only provides labels. 

% We simulate two scenarios in VFL: 1) 2-party with one passive party on MNIST and CIFAR10. (we show experiments with 7 parties on ModelNet in Appendix B). For each scenarios, we split each model into a bottom model and a top model, and assign them to passive party and active party, respectively. Table \ref{table:models} shows datasets and corresponding models for all parties. 

We simulate a VFL scenario by splitting a neural network into a bottom model and a top model and assigning the bottom model to each passive party and the top model to the active party. Table \ref{table:models} summarizes our VFL scenarios (see details of the experimental setting in Appendix A).

% datasets and corresponding models for all parties . 

\begin{table}[!h]
	\centering
	\footnotesize
	\begin{tabular}{c||c|c|c|c}
	   % \toprule[0.5pt]
	        \hline
             \\[-1em]
		\shortstack{Dataset \\ Name} & \shortstack{Model \\ Name}  & \shortstack{Model of \\ Passive Party } & \shortstack{Model of \\ Active Party}   & \# P \\
         \hline
         \hline
          \\[-1em]
            MNIST &  LeNet & 2 Conv  &  3 FC  & 2 \\
		\hline
          \\[-1em]
		CIFAR10 & AlexNet & 5 Conv  &  1 FC & 2  \\
		\hline
          \\[-1em]
		CIFAR10 & ResNet18 & 17 Conv & 1 FC & 2\\
% 	\bottomrule[0.5pt]
     \hline
     \\[-1em]
     	ModelNet\tablefootnote{We conduct experiments with 7 parties on ModelNet in Appendix B due to the limitation of space.} & LeNet & 2 Conv  &  3 FC  & 7 \\
		\hline
         
	\end{tabular}
 \vspace{-0.6em}
	\caption{Models for evaluation. \# P denotes the number of parties. FC: fully-connected layer. Conv: convolution layer. 
% 	FC represents fully-connected layer  and RN represents ResNet.
	}
\label{table:models}
\vspace{-3pt}
\end{table}

% \subsubsection{Methods} We compare the proposed method with six existing protection methods, namely...
\begin{figure*}[!h]
\vspace{-3pt}
	\centering
      		\begin{subfigure}{0.3\textwidth}
  		 	\includegraphics[width=1\textwidth]{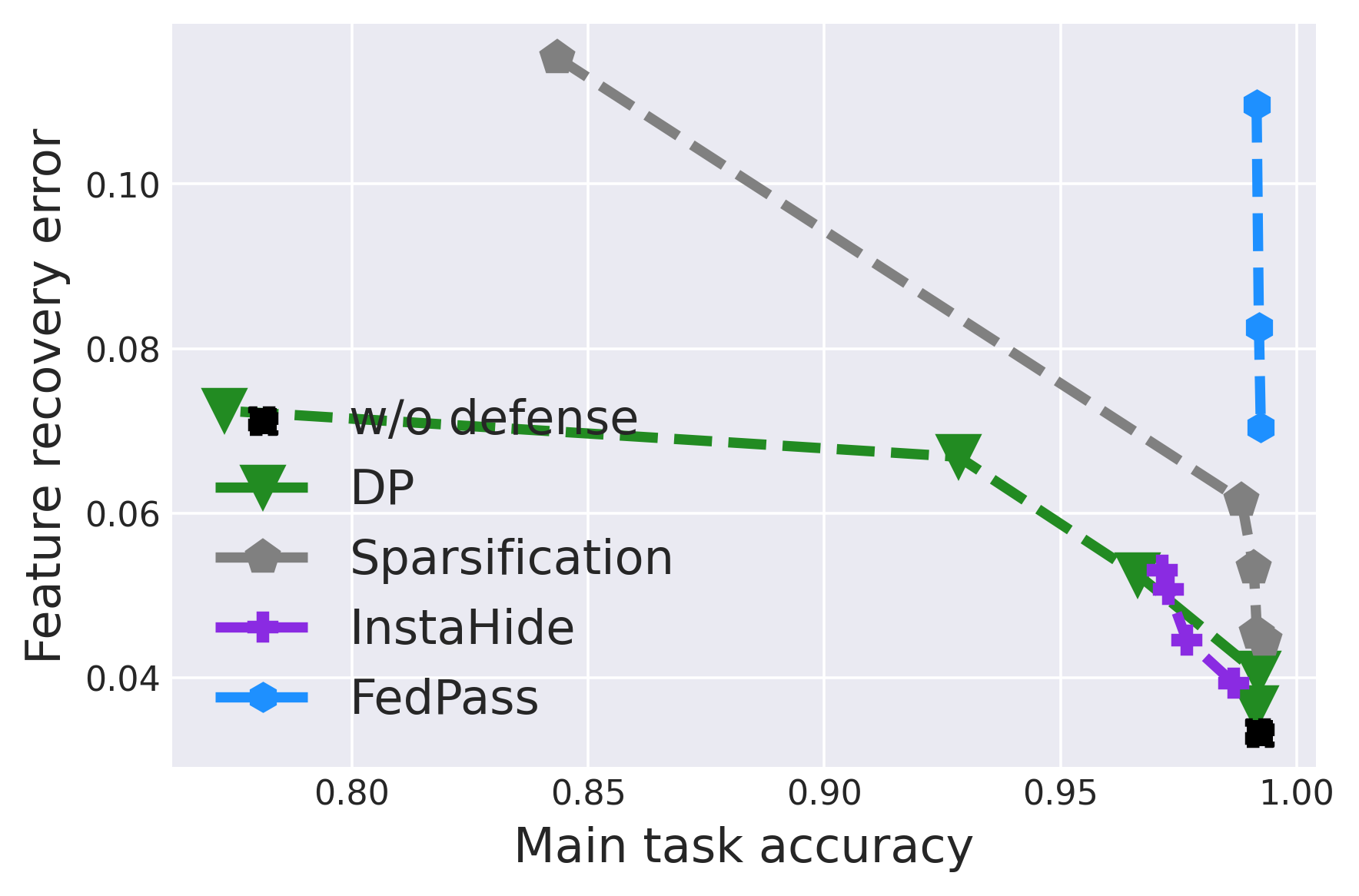}
      \subcaption{LeNet-MNIST}
    		\end{subfigure}
    	\begin{subfigure}{0.3\textwidth}
  		 	\includegraphics[width=1\textwidth]{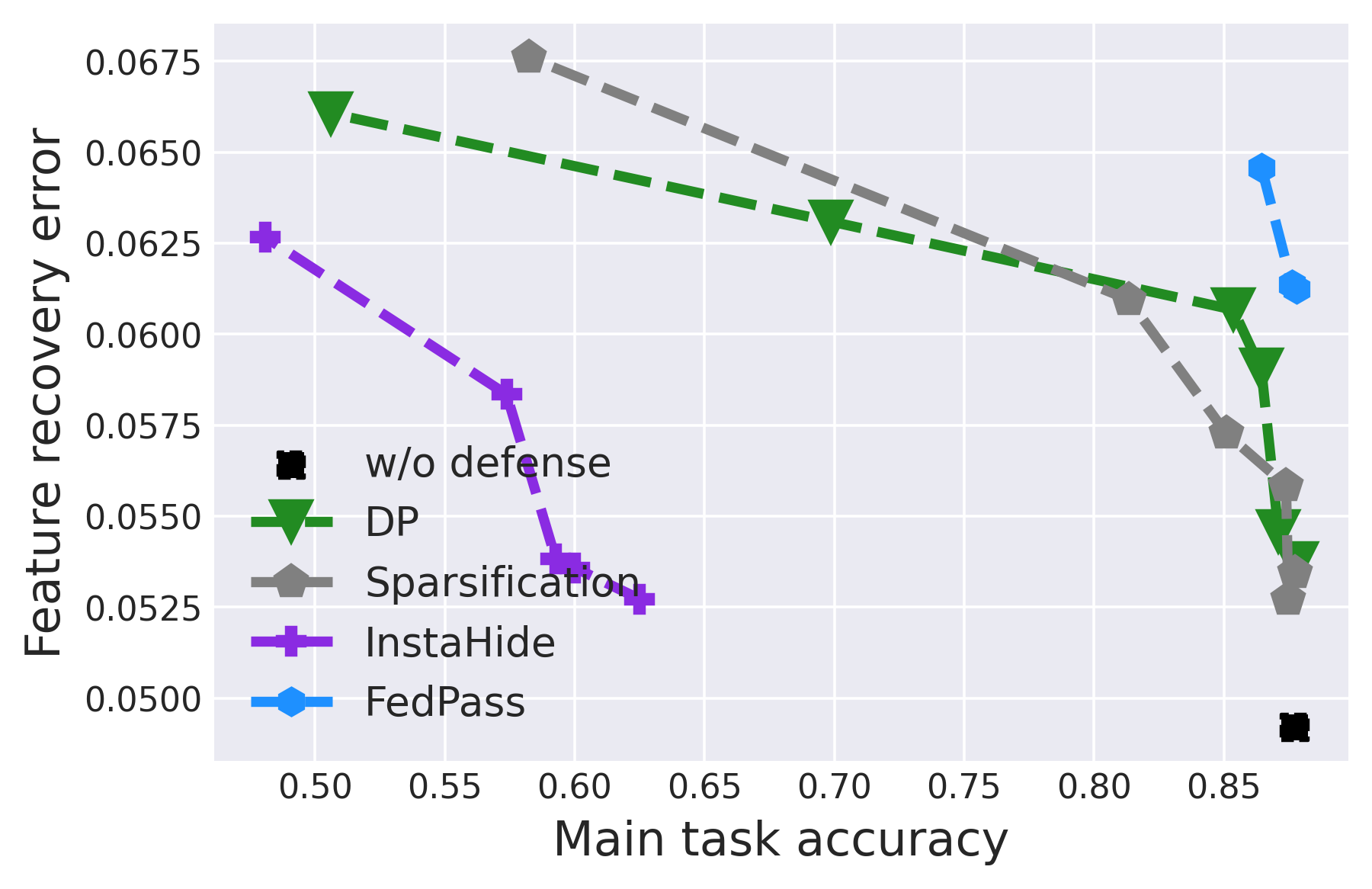}
            \subcaption{AlexNet-CIFAR10}
    		\end{subfigure}
   \begin{subfigure}{0.3\textwidth}
			\includegraphics[width=1\textwidth]{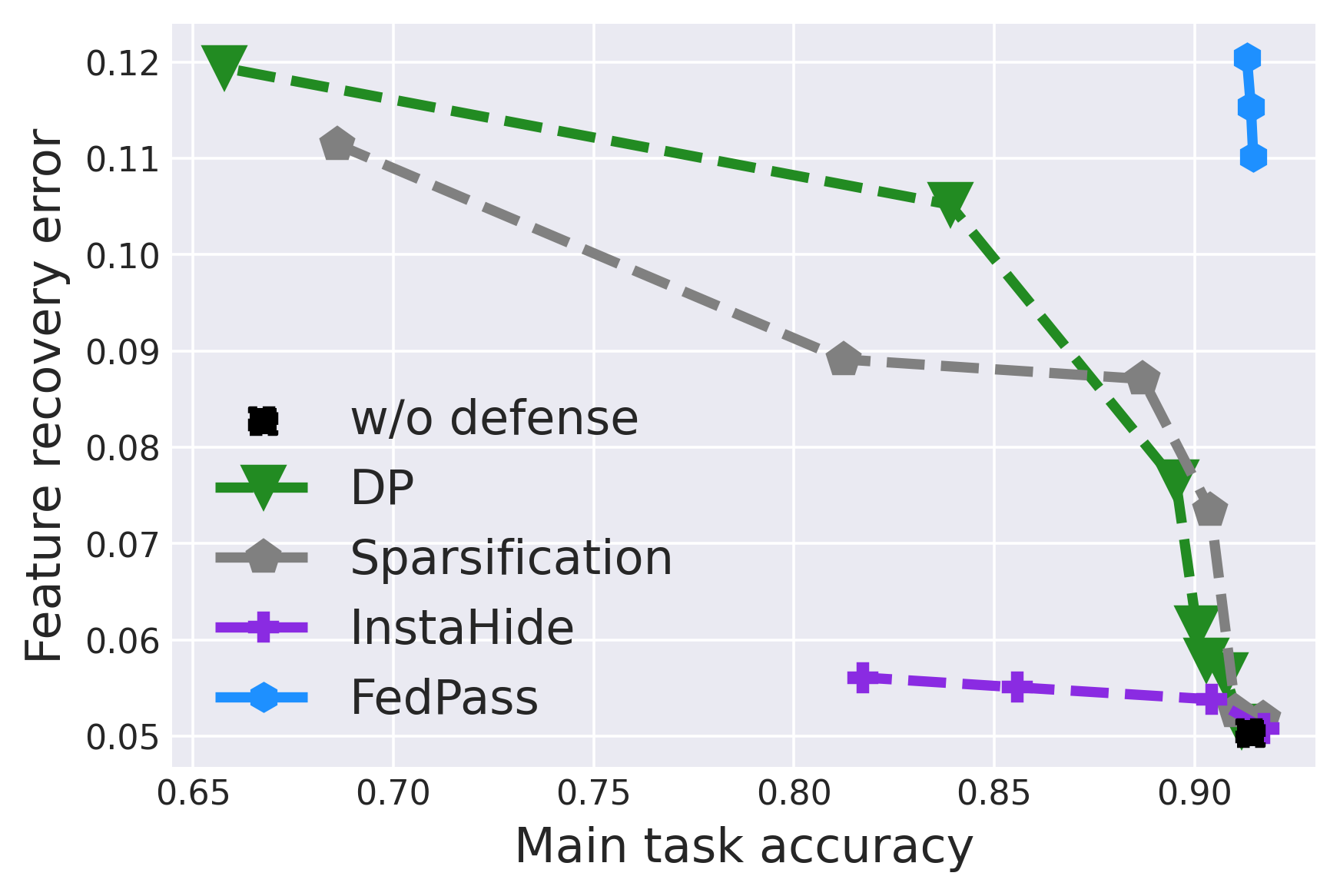}
         \subcaption{ResNet-CIFAR10}
		\end{subfigure}

      		\begin{subfigure}{0.3\textwidth}
  		 	\includegraphics[width=1\textwidth]{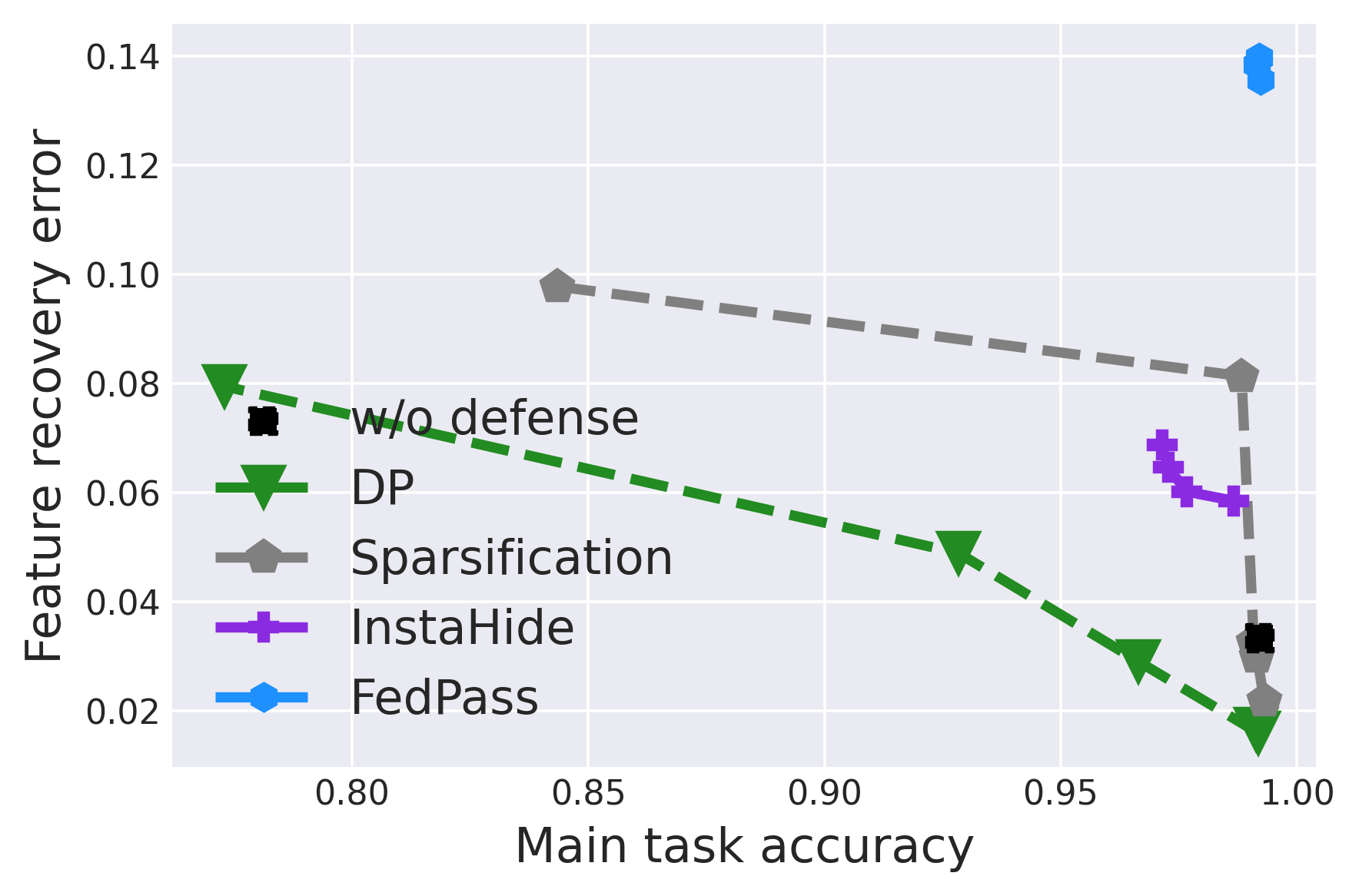}
      \subcaption{LeNet-MNIST}
    		\end{subfigure}
    	\begin{subfigure}{0.3\textwidth}
  		 	\includegraphics[width=1\textwidth]{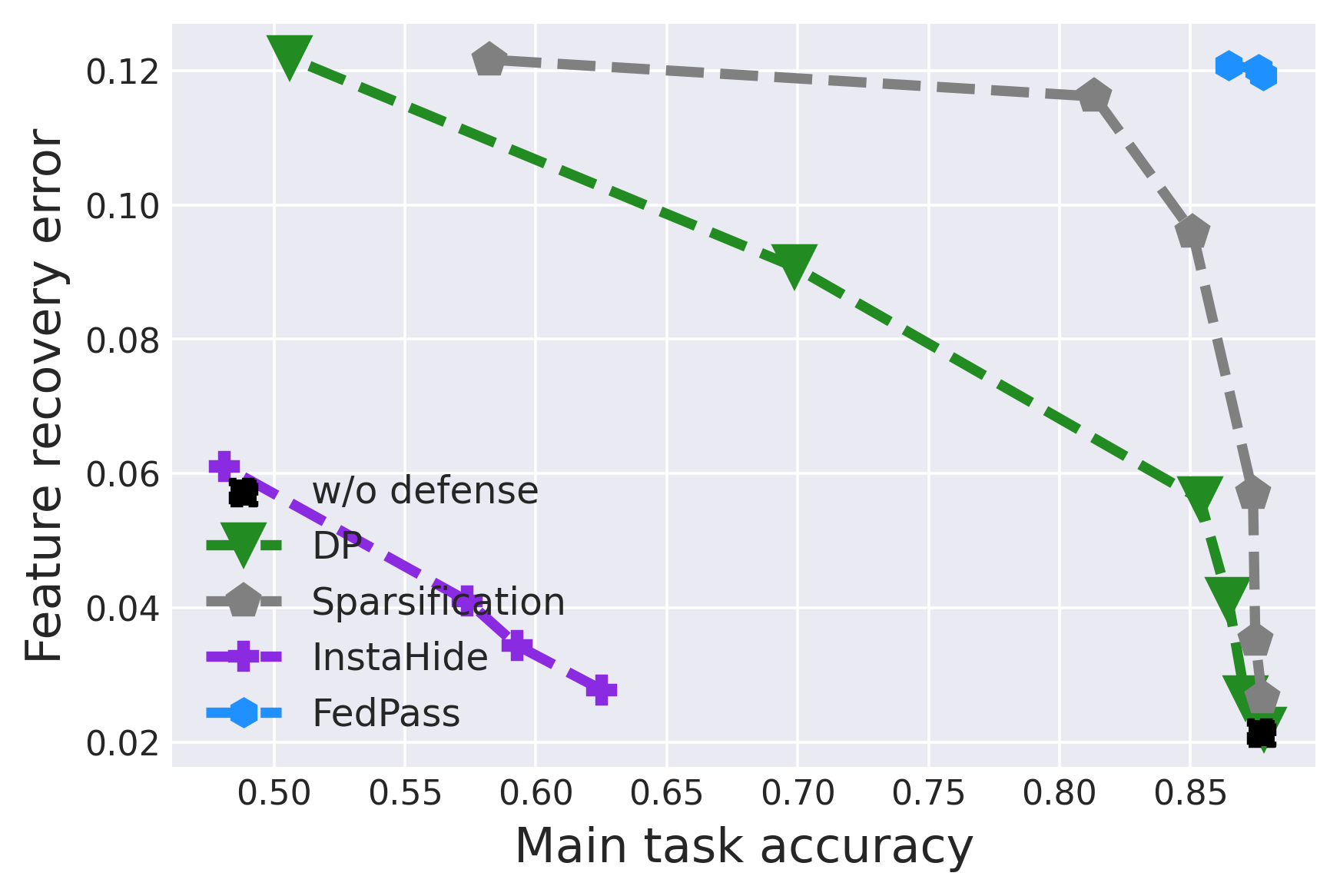}
            \subcaption{AlexNet-CIFAR10}
    		\end{subfigure}
   \begin{subfigure}{0.3\textwidth}
			\includegraphics[width=1\textwidth]{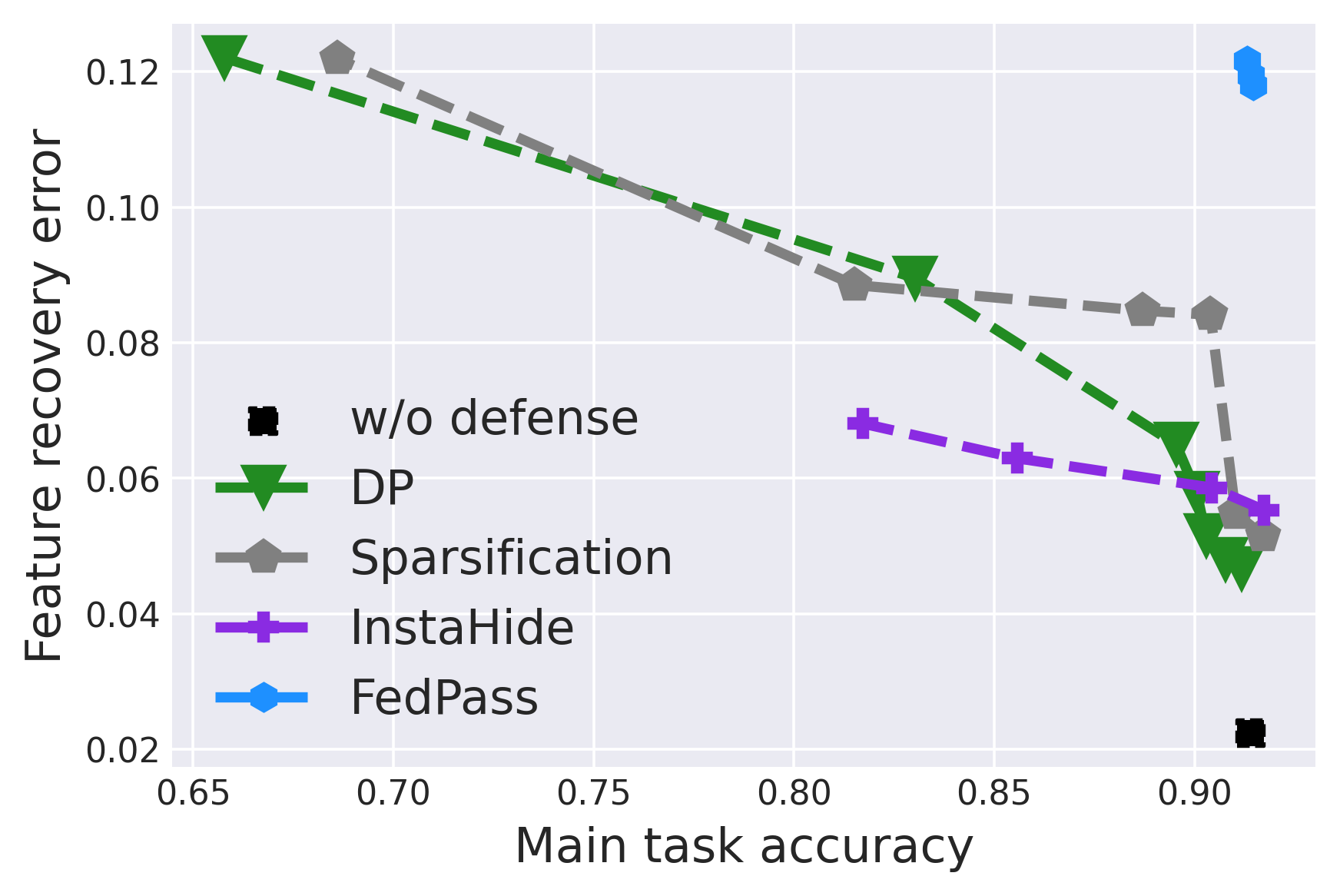}
         \subcaption{ResNet-CIFAR10}
		\end{subfigure}

      		\begin{subfigure}{0.3\textwidth}
  		 	\includegraphics[width=1\textwidth]{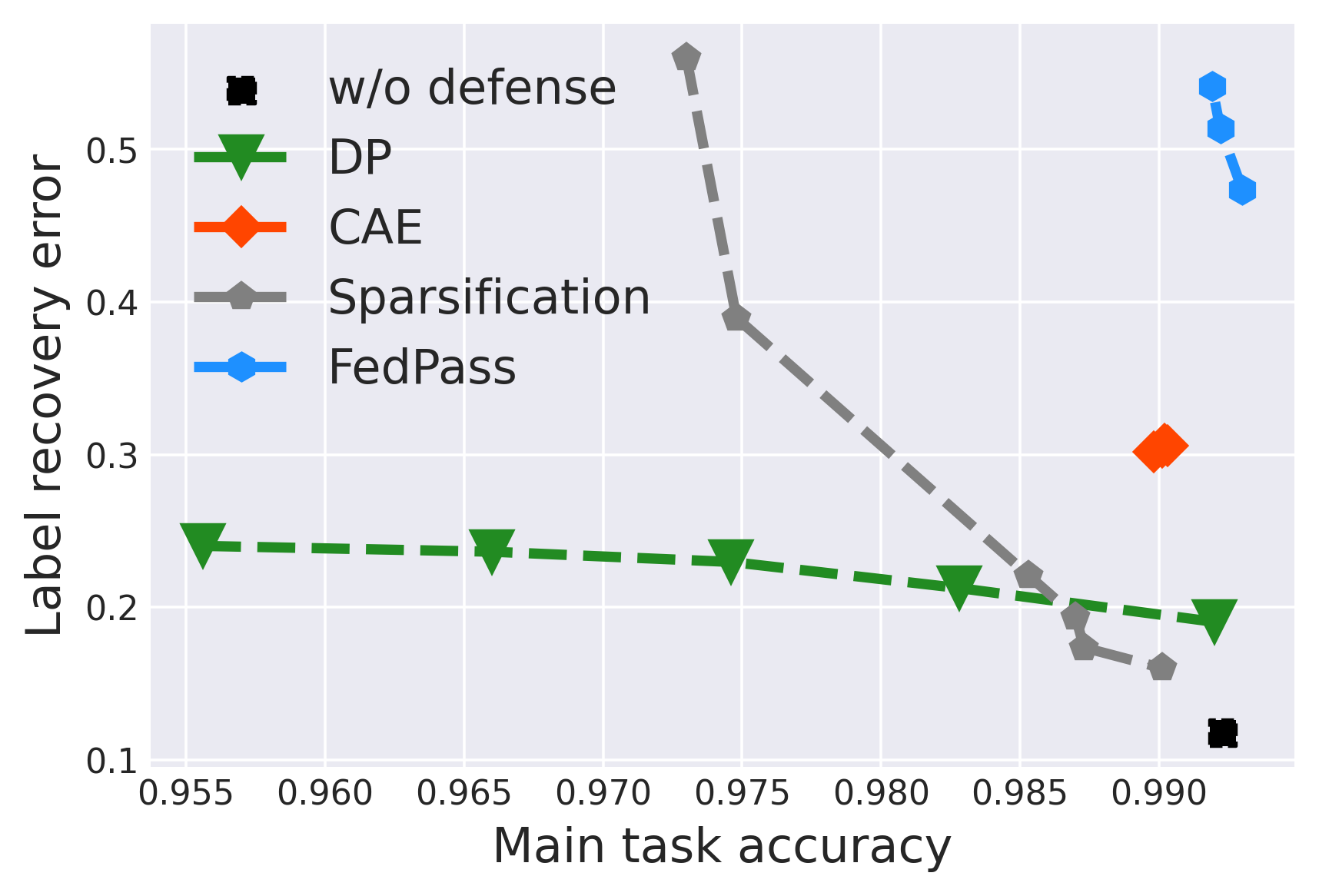}
      \subcaption{LeNet-MNIST}
    		\end{subfigure}
    	\begin{subfigure}{0.3\textwidth}
  		 	\includegraphics[width=1\textwidth]{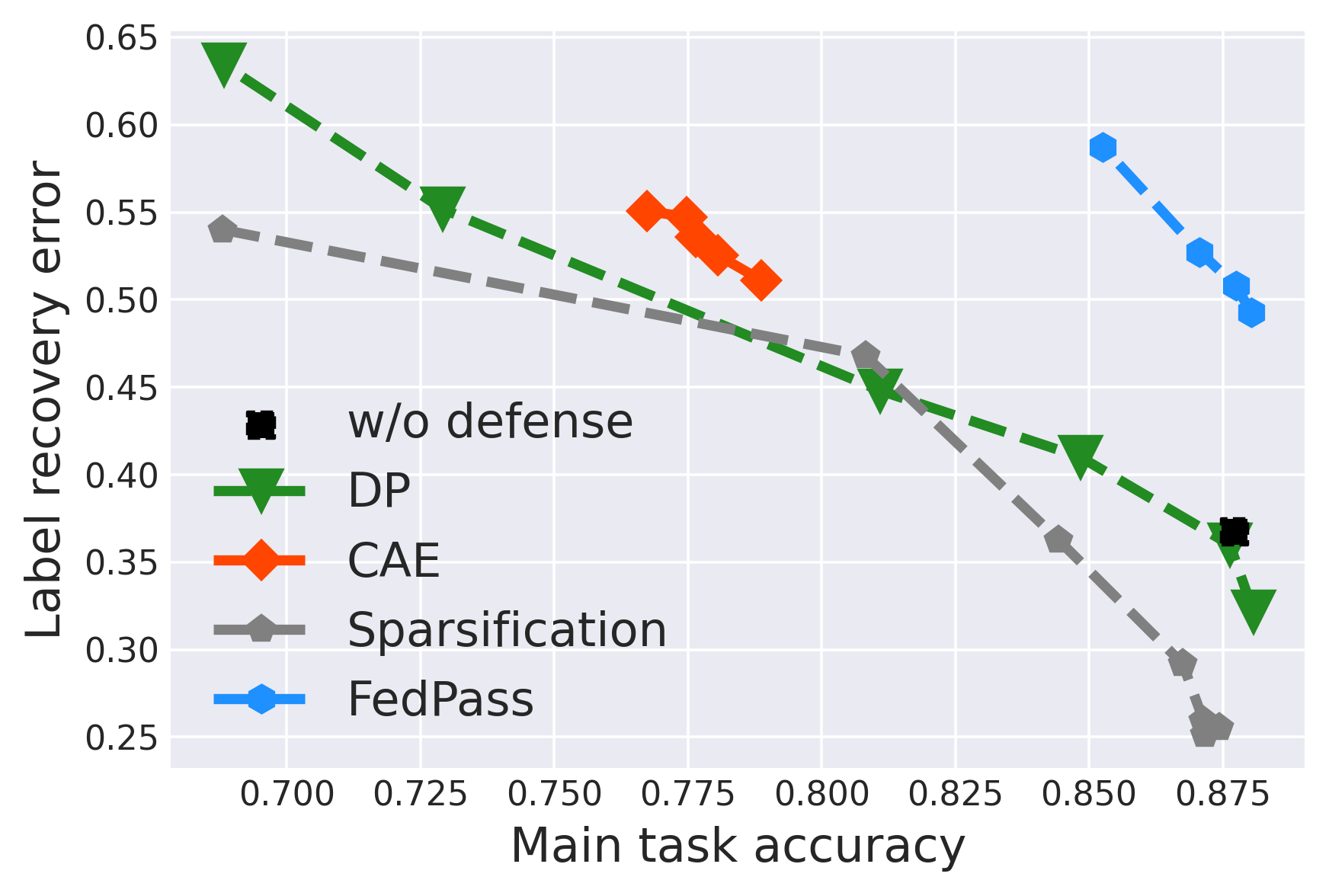}
            \subcaption{AlexNet-CIFAR10}
    		\end{subfigure}
   \begin{subfigure}{0.3\textwidth}
			\includegraphics[width=1\textwidth]{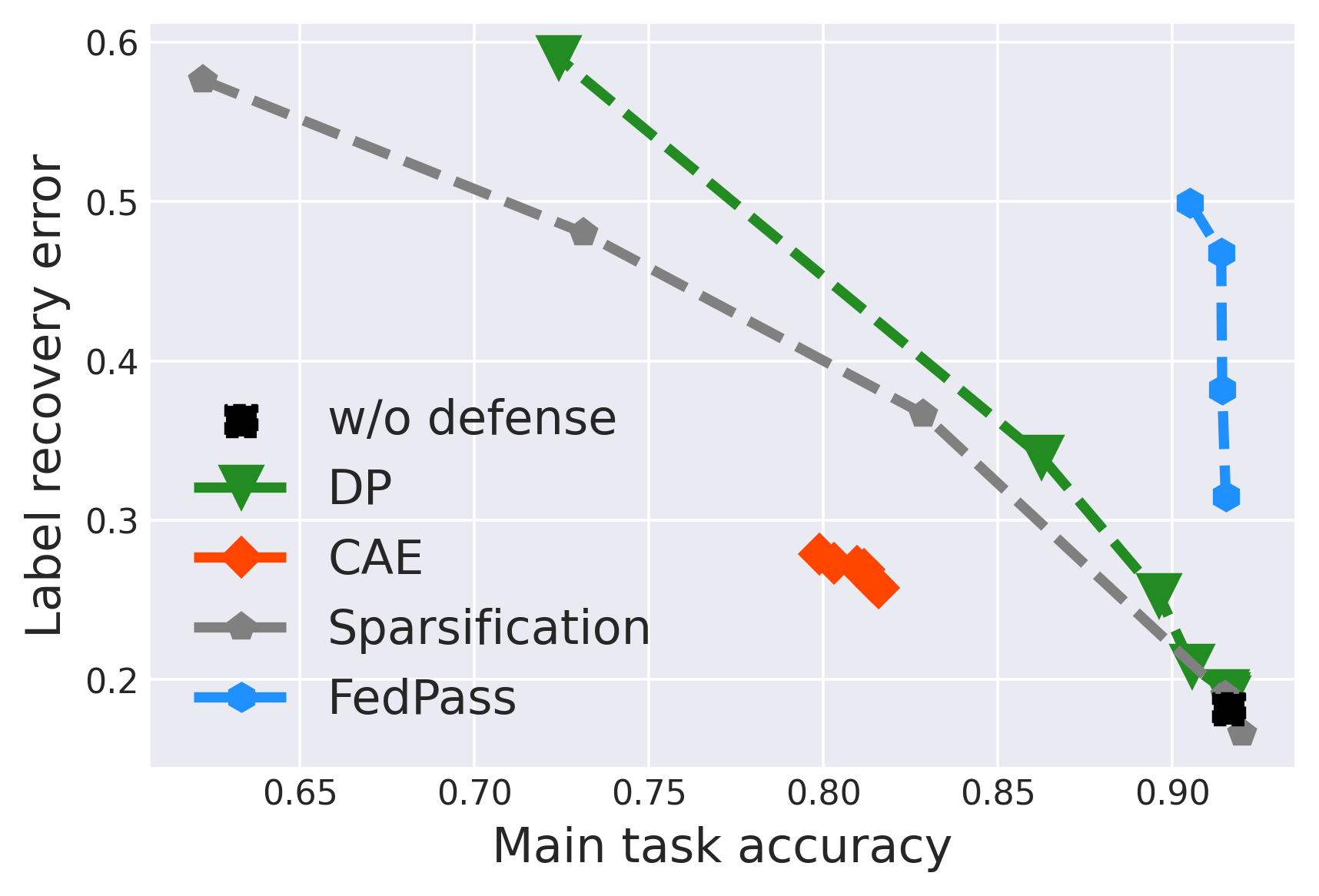}
         \subcaption{ResNet-CIFAR10}
		\end{subfigure}
\vspace{-0.8em}
 \caption{Comparison of different defense methods in terms of their trade-offs between main task accuracy and data (feature or label) recovery error against three attacks on LeNet-MNIST, AlexNet-CIFAR10 and ResNet-CIFAR10, respectively. \textbf{Model Inversion} (the first line) and \textbf{CAFE} (the second line) are feature reconstruction attacks, whereas \textbf{Passive Model Completion} (the third line) is a label inference attack. \textit{A better trade-off curve should be more toward the top-right corner of each figure}.}
	\label{fig:tradeoff_result}
 \vspace{-4pt}
\end{figure*}

% Comparison of different defense methods in terms of their trade-offs between main task accuracy and data recovery error against three attacks on LeNet-MNIST, AlexNet-CIFAR10 and ResNet-CIFAR10, respectively. Passive Model Completion (PMC \protect\cite{fu2022label}, the first line), Model Inversion (MI \protect\cite{he2019model}, the second line) and CAFE \protect\cite{jin2021cafe} (the third line). PMC is a label inference attack, whereas MI and CAFE are feature inference attacks.

\begin{table*}[!ht]
\center
\small
\setlength{\tabcolsep}{1.5mm}
% Please add the following required packages to your document preamble:
% \usepackage{multirow}
\begin{tabular}{c|c||c|c|c|c|c|c}
\hline
\multicolumn{2}{c||}{\diagbox[dir=SW]{Defense}{Attack}} & w/o defense & CAE                      & Sparsification   & DP          & InstaHide     & FedPass       \\ \hline \hline
\multirow{3}{*}{CAFE}                    & LeNet                       & 0.033       & \textemdash              & 0.049$\pm$0.026 & 0.033$\pm$0.018 & 0.061$\pm$0.004 & \textbf{0.137$\pm$0.002} \\ 
                                         & AlexNet                   & 0.019       & \textemdash                             & 0.058$\pm$0.026 & 0.042$\pm$0.017 & 0.023$\pm$0.004 & \textbf{0.105$\pm$0.001} \\
                                         & ResNet                    & 0.021       & \textemdash                             & 0.067$\pm$0.014 & 0.057$\pm$0.014 & 0.053$\pm$0.002 & \textbf{0.109$\pm$0.001} \\ \hline
\multirow{3}{*}{MI}                      & LeNet                      & 0.033       & \textemdash                             & 0.060$\pm$0.020 & 0.049$\pm$0.010 & 0.046$\pm$0.005 & \textbf{0.087$\pm$0.001} \\
                                         & AlexNet                  & 0.043       & \textemdash                                 & 0.047$\pm$0.003 & 0.046$\pm$0.006 & 0.032$\pm$0.001 & \textbf{0.054$\pm$0.001} \\
                                         & ResNet                   & 0.046       & \textemdash                                 & 0.065$\pm$0.012 & 0.063$\pm$0.015 & 0.047$\pm$0.001 & \textbf{0.105$\pm$0.004} \\ \hline
                                              % \\[-1em]
\multirow{3}{*}{PMC}                      & LeNet                     & 0.117       & 0.302$\pm$0.002           & 0.277$\pm$0.140           & 0.216$\pm$0.015 & \textemdash           & \textbf{0.506$\pm$0.028} \\
                                         & AlexNet                 & 0.322       & 0.415 $\pm$0.008          & 0.283$\pm$0.065           & 0.358$\pm$0.051 & \textemdash            & \textbf{0.460$\pm$0.025} \\
                                         & ResNet                    & 0.166       & 0.217$\pm$0.004           & 0.268$\pm$0.088          & 0.237$\pm$0.087 & \textemdash            &  \textbf{0.379$\pm$0.065}  \\ \hline
\end{tabular}
\vspace{-0.8em}
\caption{The Calibrated averaged performance (CAP) for different defense mechanisms against CAFE, MI and PMC attacks.}
\label{tab:cap}
\end{table*}

% \begin{figure*}[htbp]
% \vspace{-3pt}
% 	\centering
%     	[LeNet-MNIST]{
%     		\begin{minipage}[b]{0.3\textwidth}
%   		 	\includegraphics[width=1\textwidth]{imgs/exp/tradeoff/lenet_blackbox_mi.png}
%     		\end{subfigure}
%     	}
%         [AlexNet-CIFAR10]{
%     		\begin{minipage}[b]{0.3\textwidth}
%   		 	\includegraphics[width=1\textwidth]{imgs/exp/tradeoff/alexnet_blackbox_mi.png}
%     		\end{subfigure}
%     	}
%     	[ResNet18-CIFAR10]{
% 		\begin{minipage}[b]{0.3\textwidth}
% 			\includegraphics[width=1\textwidth]{imgs/exp/tradeoff/resnet_blackbox_mi.png}
% 		\end{subfigure}}

%     	[LeNet-MNIST]{
%     		\begin{minipage}[b]{0.3\textwidth}
%   		 	\includegraphics[width=1\textwidth]{imgs/exp/tradeoff/lenet_whitebox_mi.png}
%     		\end{subfigure}
%     	}
%         [AlexNet-CIFAR10]{
%     		\begin{minipage}[b]{0.3\textwidth}
%   		 	\includegraphics[width=1\textwidth]{imgs/exp/tradeoff/alexnet_whitebox_mi.png}
%     		\end{subfigure}
%     	}
%     	[ResNet18-CIFAR10]{
% 		\begin{minipage}[b]{0.3\textwidth}
% 			\includegraphics[width=1\textwidth]{imgs/exp/tradeoff/resnet_whitebox_mi.png}
% 		\end{subfigure}}
  
%  	\caption{Privacy-utility trade-off comparison of FedPass with baselines on defending against feature inference attacks. (a)-(c) are for defending against Model Inversion attacks. (d)-(f) are for defending against CAFE attacks.}
% 	\label{fig:model_inversioin_result}
%  \vspace{-3pt}
% \end{figure*}

\subsubsection{Privacy Attack methods}
We investigate the effectiveness of FedPass through three privacy attacks designed for VFL, namely, Passive Model Completion (PMC) attack \cite{fu2022label}, CAFE attack~\cite{jin2021cafe} and Model Inversion (MI) attack \cite{he2019model}. The first attack is a label inference attack, whereas the last two are feature reconstruction attacks (see details in Appendix A). 

\subsubsection{Baseline Defense Methods} 
We adopt four defense mechanisms as baselines to evaluate the effectiveness of FedPass in defending against feature reconstruction and label inference attacks. Each defense mechanism is controlled by a defense strength parameter to trade-off privacy leakage and model performance. For \textbf{Differential Privacy (DP)} \cite{abadi2016deep}, 
% we experiment with Gaussian noise levels ranging from 5e-5 to 1.0 added to the gradients and forward embedding after they are 2-norm clipped with 0.2 and 3, respectively. \textcolor{red}{revise DP by YanKang}
we experiment with Gaussian noise levels ranging from 5e-5 to 1.0. We add noise to gradients for defending against MC attack while add noise to forward embeddings for thwarting CAFE and MI attacks. 
% they are 2-norm clipped with 0.2 and 3, respectively. 
%Gaussian noise is also added to defend against data reconstruction attack in which gradients are 2-norm clipped with 3 with noise of standard deviation ranging from 0.1 to 10 added. 
For \textbf{Sparsification} \cite{fu2022label,lin2018deep}, we implement gradient sparsification \cite{fu2022label} and forward embedding sparsification \cite{lin2018deep} for defending against label inference attack and feature reconstruction attack, respectively. Sparsification level are chosen from 0.1\% to 50.0\%.  For \textbf{Confusional AutoEncoder} (CAE)~\cite{zou2022defending}, we follow the implementation of the original paper. That is, both the encoder and decoder of CAE have the architecture of 2 FC layers. Values of the hyperparameter that controls the confusion level are chosen from 0.0 to 2.0. For \textbf{InstaHide}~\cite{huang2020instahide}, we mix up 1 to 4 of images to trade-off privacy and utility. For \textbf{FedPass}, the range of the mean of Gaussian distribution $N$ is from 2 to 200, the variance is from 1 to 64. Passports are embedded in the last convolution layer of the passive party's model and first fully connected layer of the active party's model.

% For DG \cite{}. Number of bins for gradient quantification ranging from 3 to 24 is evaluated in the experiments.

\subsubsection{Evaluation metrics} 
We use data (feature or label) recovery error and main task accuracy to evaluate defense mechanisms. We adopt the ratio of incorrectly labeled samples by a label inference attack to all labeled samples to measure the performance of that label inference attack. We adopt Mean Square Error (MSE) \cite{zhu2019dlg} between original images and images recovered by a feature reconstruction attack to measure the performance of that feature reconstruction attack. MSE is widely used to assess the quality of recovered images. A higher MSE value indicates a higher image recovery error. In addition, we leverage Calibrated Averaged Performance (CAP) \cite{fan2020rethinking} to quantify the trade-off between main task accuracy and data recovery error. CAP is defined as follows:
\begin{definition}[Calibrated Averaged Performance (CAP)] \label{def:cap}
For a given Privacy-Preserving Mechanism $g_s\in \calG$ ($s$ denotes the controlled parameter of $g$, e.g., the sparsification level, noise level and passport range) and attack mechanism $a \in \calA$, the Calibrated Averaged Performance is defined as:
\begin{equation}
    \text{CAP}(g_s, a) = \frac{1}{m}\sum_{s=s_1}^{s_m} Acc(g_s, x) * Rerr(x, \hat{x}_s),
\end{equation}
where $Acc(\cdot)$ denotes the main task accuracy and $Rerr(\cdot)$ denotes the recovery error between original data $x$ and estimated data $\hat{x}_s$ via attack $a$.
\end{definition}

% CAP helps us compare trade-offs of different defense mechanisms based on a single metric.

% We calculate Peak Signal-to-Noise Ratio (\textsc{PSNR}) to measure the similarity between original images and images recovered from DLG. PSNR is an objective standard for image evaluation, and it is defined as the logarithm of the ratio of the squared maximum value of RGB image fluctuation over MSE between two images. The higher the PSNR score, the higher the similarity between the two images. We also apply differential privacy (DP) to \textsc{FedAvg} by adding Gaussian noises to shared gradients. We experiment with two noise levels, $\sigma^2=0.1$ and $\sigma^2=0.001$. 

\subsection{Experiment Results} \label{sec:exp}
% Please add the following required packages to your document preamble:
% \usepackage{multirow}
% Please add the following required packages to your document preamble:
% \usepackage{booktabs}
% \usepackage{multirow}
% Please add the following required packages to your document preamble:
% \usepackage{booktabs}
% \usepackage{multirow}

\subsubsection{Defending against the Feature Reconstruction Attack}
 
Figure \ref{fig:tradeoff_result} (a)-(f) compare the trade-offs between feature recovery error (y-axis) and main task accuracy (x-axis) of FedPass with those of baselines against MI and CAFE attacks on three models. We observe: \romannumeral1) DP and Sparsification can generally achieve either a high main task performance or a large feature recovery error (low privacy leakage), but not both. For example, DP and Sparsification can achieve a main task performance as high as $\geq 0.90$ while obtaining a feature recovery error as low as $\leq 0.06$ on ResNet-CIFAR10. At the other extreme, DP and Sparsification can achieve $\geq 0.11$ feature recovery error but obtain $\leq 0.70$ main task performance on ResNet-CIFAR10. \romannumeral2) InstaHide generally can not thwart MI and CAFE attacks. Even mixing up with more data, InstaHide still leads to a relatively small feature recovery error while its main task performance degrades significantly. \romannumeral3) The trade-off curves of FedPass reside near the top-right corner under both attacks on all models, indicating that FedPass achieves the best performance on preserving feature privacy while maintaining the model performance. For example, FedPass achieves $\geq 0.91$ main task accuracy and $\geq 0.12$  feature recovery error under MI and CAFE attacks on ResNet-CIFAR10. Table \ref{tab:cap} also demonstrates that FedPass has the best trade-off between privacy and performance under MI and CAFE attacks.

% \romannumeral1) The baseline (black point of Figure \ref{fig:tradeoff_result}) without any defense leaks the privacy, e.g., the data recovery error calculated by attackers is small as 0.02 for the CIFAR10 dataset, although it reaches the great model performance; 

% the model performance drops more than 20\% on ResNet-CIFAR10 for DP and Sparsification when the data recovery error is high than 0.11. 

 % Some defense mechanisms preserve feature privacy but jeopardizes model performance, e.g., the model performance drops more than 20\% on ResNet-CIFAR10 for DP and Sparsification when the data recovery error is high than 0.11. Moreover, adding the mixture number of InstaHide degrades the main task accuracy.

% We compare CAP values of different defense mechanisms in Table \ref{tab:cap}, which reports that FedPass has the highest CAP values, indicating that FedPass performs the best on preserving data (features and labels) privacy while maintaining model performance against all attacks on all models. 

Figure \ref{fig:vis-whitebox} showcases that, when protected by FedPass ($r8$), reconstructed images under the CAFE attack on all three datasets are essentially random noise, manifesting that FedPass can thwart the CAFE attack effectively. At the same time, the model performance of FedPass is almost lossless compared to that of the original model (w/o defense) on every dataset (see Figure \ref{fig:tradeoff_result}). This superior trade-off of FedPass is in sharp contrast to existing methods, among which DP and Sparsification with high protection levels ($r5$ and $r7$) lead to significantly deteriorated model performance.

\begin{figure}[!ht]
\vspace{-3pt}
	\centering
 \captionsetup[subfigure]{labelformat=empty}
	{
   $r1$
		\begin{subfigure}{0.141\textwidth}
\includegraphics[width=1\textwidth]{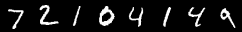}
		\end{subfigure}
	}
    	{
    		\begin{subfigure}{0.141\textwidth}
  		 	\includegraphics[width=1\textwidth]{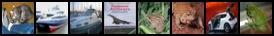}
    		\end{subfigure}
    	}
    	{
    		\begin{subfigure}{0.141\textwidth}
  		 	\includegraphics[width=1\textwidth]{imgs/exp/reconstruct_whitebox_new/cifar_8_oriimg.png}
    		\end{subfigure}
    	}\vspace{-1pt}

     {
            $r2$
		\begin{subfigure}{0.141\textwidth}
\includegraphics[width=1\textwidth]{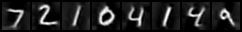}
		\end{subfigure}
	}
      {
		\begin{subfigure}{0.141\textwidth}
\includegraphics[width=1\textwidth]{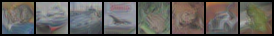}
		\end{subfigure}
	}
      {
		\begin{subfigure}{0.141\textwidth}
\includegraphics[width=1\textwidth]{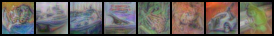}
		\end{subfigure}
	}\vspace{-1pt}

      {
             $r3$
		\begin{subfigure}{0.141\textwidth}
\includegraphics[width=1\textwidth]{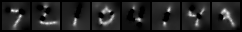}
		\end{subfigure}
	}
      {
		\begin{subfigure}{0.141\textwidth}
\includegraphics[width=1\textwidth]{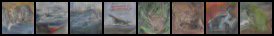}
		\end{subfigure}
	}
      {
		\begin{subfigure}{0.141\textwidth}
\includegraphics[width=1\textwidth]{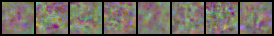}
		\end{subfigure}
	}\vspace{-1pt}

      {
            $r4$
		\begin{subfigure}{0.141\textwidth}
\includegraphics[width=1\textwidth]{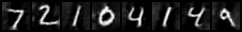}
		\end{subfigure}
	}
      {
		\begin{subfigure}{0.141\textwidth}
\includegraphics[width=1\textwidth]{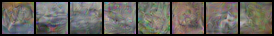}
		\end{subfigure}
	}
      {
		\begin{subfigure}{0.141\textwidth}
\includegraphics[width=1\textwidth]{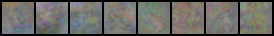}
		\end{subfigure}
	}\vspace{-1pt}

      {   $r5$
		\begin{subfigure}{0.141\textwidth}
\includegraphics[width=1\textwidth]{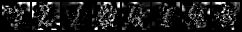}
		\end{subfigure}
	}
      {
		\begin{subfigure}{0.141\textwidth}
\includegraphics[width=1\textwidth]{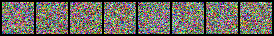}
		\end{subfigure}
	}
      {
		\begin{subfigure}{0.141\textwidth}
\includegraphics[width=1\textwidth]{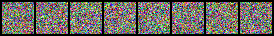}
		\end{subfigure}
	}\vspace{-1pt}

      {
         $r6$
		\begin{subfigure}{0.141\textwidth}
\includegraphics[width=1\textwidth]{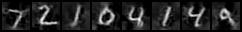}
		\end{subfigure}
	}
      {
		\begin{subfigure}{0.141\textwidth}
\includegraphics[width=1\textwidth]{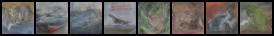}
		\end{subfigure}
	}
      {
		\begin{subfigure}{0.141\textwidth}
\includegraphics[width=1\textwidth]{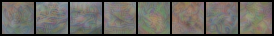}
		\end{subfigure}
	}\vspace{-1pt}

      {
        $r7$
		\begin{subfigure}{0.141\textwidth}
\includegraphics[width=1\textwidth]{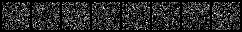}
		\end{subfigure}
	}
      {
		\begin{subfigure}{0.141\textwidth}
\includegraphics[width=1\textwidth]{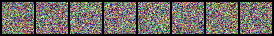}
		\end{subfigure}
	}
      {
		\begin{subfigure}{0.141\textwidth}
\includegraphics[width=1\textwidth]{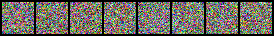}
		\end{subfigure}
	}\vspace{-1.2pt}

       {
        $r8$
\begin{subfigure}{0.141\textwidth}
\includegraphics[width=1\textwidth]{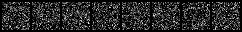}
% \subcaption*{\tiny LeNet-MNIST}
		\end{subfigure}
	}
      {
		\begin{subfigure}{0.141\textwidth}
\includegraphics[width=1\textwidth]{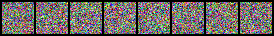}
% \subcaption*{\tiny AlexNet-CIFAR10}
		\end{subfigure}
	}
      {
		\begin{subfigure}{0.141\textwidth}
\includegraphics[width=1\textwidth]{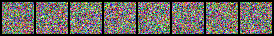}
% \subcaption*{\tiny ResNet-CIFAR10}
		\end{subfigure}
	}
 % \vspace{-1pt}

       {
\begin{subfigure}{0.16\textwidth}
\subcaption*{\tiny LeNet-MNIST}
		\end{subfigure}
	}
      {
		\begin{subfigure}{0.146\textwidth}
\subcaption*{\tiny AlexNet-CIFAR10}
		\end{subfigure}
	}
      {
		\begin{subfigure}{0.146\textwidth}
\subcaption*{\tiny ResNet-CIFAR10}
		\end{subfigure}
	}
        \vspace{-1.8em}
	\caption{% From top to bottom, each row represents original image ($r1$), no defense ($r1$), InstaHide \protect\cite{huang2020instahide}, DP with noise level $0.2, 2$ \protect\cite{abadi2016deep}, Sparsification with sparsity level $0.75,0.05$ \protect\cite{lin2018deep} and FedPass. 
 Original images and images reconstructed by CAFE attack for different defense mechanisms on LeNet-MNIST, AlexNet-CIFAR10 and ResNet-CIFAR10, respectively. From top to bottom, a row represents original image ($r1$), no defense ($r2$), InstaHide ($r3$), DP with noise level $0.2$ ($r4$) and $2$ ($r5$), Sparsification with sparsification level $0.5$ ($r6$) and $0.05$ ($r7$), and FedPass ($r8$).} 
	\label{fig:vis-whitebox}
	\vspace{-6pt}
\end{figure}

\subsubsection{Defending against the Label Inference Attack}

% \textcolor{red}{revise by YanKang} Figure \ref{fig:tradeoff_result} (g)-(i) compare the trade-offs between label recovery error (y-axis) and main task accuracy (x-axis) of FedPass with those of baselines against the PMC attack on three models. Increasing defense strength by increasing noise level, sparsification rate or the range of Gaussian noise $\mu_i$ in FedPass will lead to lower main task accuracy and higher label recovery error. Furthermore, the FedPass outperforms other defense methods since the curve (blue point) lies in the top right of Figure \ref{fig:tradeoff_result} (e.g., the main accuracy is larger than 99\% and label recovery error is larger than 0.4 simultaneously for MNIST). In addition, Table \ref{tab:cap} reports that FedPass achieves the higher CAP on main task accuracy and label recovery error.

Figure \ref{fig:tradeoff_result} (g)-(i) compare the trade-offs between label recovery error (y-axis) and main task accuracy (x-axis) of FedPass with those of baselines in the face of the PMC attack on three models. It is observed DP, Sparsification, and CAE fail to achieve the goal of obtaining a low level of privacy leakage while maintaining model performance, whereas FedPass is more toward the top-right corner, indicating that FedPass has a better trade-off between privacy and performance. Table \ref{tab:cap} reinforce the observation that FedPass achieves the best trade-off between privacy and performance under PMC attack.
% on main task accuracy and label recovery error.

% \textcolor{blue}{We evaluate the five defense approaches introduced in Section \ref{}. The experimental results of the passive label inference attack on the three models are reported in Table \ref{tab:alexnet_5+0_PMC}. Our defense method achieves good performance with accuracy. }

% We compare FedPass with four other baseline methods under the PMC attack. The result summarized in Figure \ref{fig:tradeoff_result}(a)-(c), which display the trade-off between attack label recovery error (y-axis) and main task accuracy (x-axis). 

\subsubsection{Training and Inference Time}
Table \ref{tab:time} investigates the training time (for one epoch) and inference time for FedPass and baseline defense methods. It shows that the FedPass is as efficient as the VFL w/o defense for both training and inference procedures (the training time on MNIST for each epoch is 7.03s and inference time is 1.48s) because embedding passport only introduces a few model parameters to train. It is worth noting that the training time of InstaHide is almost twice that of other methods because InstaHide involves mixing-up multiple feature vectors or labels, which is time-consuming.
\begin{table}[htbp]
\footnotesize
\center
\begin{tabular}{@{}lrlllll@{}}
\toprule
\multirow{2}{*}{\begin{tabular}[c]{@{}l@{}}Defense\\ Method\end{tabular}} &
  \multicolumn{2}{l}{\scriptsize LeNet-MNIST} &
  \multicolumn{2}{l}{\scriptsize AlexNet-Cifar10}  & \multicolumn{2}{l}{\scriptsize ResNet-Cifar10}\\ \cmidrule(l){2-7} 
 & Train & Infer & Train & Infer & Train & Infer \\ \hline \hline
w/o defense & 7.03  & 1.48 & 22.37 & 2.20 & 22.64 & 2.18 \\
\hline
% \\[-1em]
CAE         & 7.30  & 1.48 & 22.71 & 2.27 & 23.02 & 2.21 \\
Sparsification & 6.93  & 1.45 & 22.39 & 2.12 & 22.61 & 2.21 \\
%Feat Sparse & 6.92  & 1.49 & 22.29 & 2.17 & 22.57 & 2.16 \\
DP       & 7.01  & 1.49 & 22.24 & 2.23 & 22.63 & 2.16 \\
InstaHide   & 21.76 & 1.50 & 37.07 & 2.19 & 46.26 & 2.18 \\
\hline
\\[-1em]
FedPass (ours)    & 7.05  & 1.46 & 22.58 & 2.13 & 22.61 & 2.16 \\ \bottomrule
\end{tabular}
\caption{Comparison of training time (for one epoch) and inference time among different defense mechanisms.}
\vspace{-1em}
\label{tab:time}
\end{table}

\section{Conclusion}
This paper proposes a novel privacy-preserving vertical federated deep learning framework called FedPass, which leverages adaptive obfuscation to protect the label and data simultaneously. Specifically, the proposed adaptive obfuscation is implemented by embedding private passports in the passive and active models to 
adapt the deep learning model such that the model performance is preserved. The extensive experiments on multiple datasets and theoretical analysis demonstrate that the FedPass can achieve significant improvements over the other protected methods in terms of model performance and privacy-preserving ability.
\vspace{20pt}

% \clearpage

\appendix

\setcounter{equation}{0}
\setcounter{theorem}{0}
\setcounter{prop}{0}
\setcounter{definition}{0}

\noindent \textbf{\huge Appendix}
\section{Experimental Setting}
% This section illustrates the experiment settings of the empirical study of the proposed FedPass framework in comparison with existing defense methods. 

This section provides detailed information on our experimental settings. Table \ref{table:models-app} summarizes datasets and DNN models evaluated in all experiments and Table \ref{tab:train-params} summarizes the hyper-parameters used for training  models.

% We consider classification tasks on standard MNIST \cite{}, CIFAR10 and ModelNet dataset.
% \begin{itemize}
%     \item For MNIST, one passvie party
% \end{itemize}

% The MNIST database of 10-class handwritten digits has a training set of 60,000 examples, and a test set of 10,000 examples. Fashion MNIST is fashion product of MNIST including 60,000 images and the test set has 10,000 images. The CIFAR-10 dataset consists of 60000 $32\times 32$ colour images in 10 classes, with 6000 images per class. 
% According to the way we split the dataset for clients in federated learning, the experiments are divided into IID setting and Non-IID setting. Specifically, We consider \textit{lable-skew} Non-IID federated learning setting, where we assume each client's training examples are drawn with class labels following a \textit{dirichlet distribution} ($Dir(\beta)$) \cite{li2021federated}, with which $\beta > 0$ is the concentration parameter controlling the identicalness among users (we use $\beta=0.1$ \& 0.5 in the following).

\subsection{Dataset \& Model Architectures}
 We consider classification tasks by LeNet \cite{lecun1998gradient} on MNIST \cite{lecun2010mnist}, AlexNet \cite{NIPS2012_c399862d} on CIFAR10 \cite{krizhevsky2014cifar}, ResNet \cite{he2016deep} on CIFAR10 dataset and LeNet on ModelNet \cite{wu20153d,liu2022cross}. The MNIST database of 10-class handwritten digits has a training set of 60,000 examples, and a test set of 10,000 examples. The CIFAR-10 dataset consists of 60000 $32\times 32$ colour images in 10 classes, with 6000 images per class. ModelNet is a widely-used 3D shape classification and shape retrieval benchmark, which currently contains 127,915 3D CAD models from 662 categories. We created 12 2D multi-view images per 3D mesh model by placing 12 virtual cameras evenly distributed around the centroid and partitioned the images into multiple (2 to 6) parties by their angles, which contains 6366 images to train and 1600 images to test. Moreover, we add the batch normalization layer of the LeNet and AlexNet for after the last convolution layer for existing defense methods.
\begin{table}[!htbp]
	\centering
	\footnotesize
	\begin{tabular}{c||c|c|c|c}
	   % \toprule[0.5pt]
	        \hline
             \\[-1em]
		\shortstack{Dataset \\ Name} & \shortstack{Model \\ Name}  & \shortstack{Model of \\ Passive Party } & \shortstack{Model of \\ Active Party}   & \# P \\
         \hline
         \hline
          \\[-1em]
            MNIST &  LeNet & 2 Conv  &  3 FC  & 2 \\
		\hline
          \\[-1em]
		CIFAR10 & AlexNet & 5 Conv  &  1 FC & 2  \\
		\hline
          \\[-1em]
		CIFAR10 & ResNet18 & 17 Conv & 1 FC & 2\\
% 	\bottomrule[0.5pt]
     \hline
     \\[-1em]
     	ModelNet & LeNet & 2 Conv  &  3 FC  & 7 \\
		\hline
         
	\end{tabular}
 \vspace{-0.6em}
	\caption{Models for evaluation. \# P denotes the number of parties. FC: fully-connected layer. Conv: convolution layer. 
% 	FC represents fully-connected layer  and RN represents ResNet.
	}
\label{table:models-app}
\vspace{-3pt}
\end{table}

\begin{table*}[!htbp]  \renewcommand\arraystretch{1.2}
\vspace{-3pt}
	\resizebox{1\textwidth}{!}{
		\begin{tabular}{l|cccc}
			\hline
			Hyper-parameter & LeNet-MNIST& AlexNet-CIFAR10 & ResNet-CIFAR10& LeNet-ModelNet \\ \hline
			Optimization method & SGD & SGD& SGD& SGD\\
			Learning rate & 1e-2 & 1e-2& 1e-2&1e-2\\
			Weight decay & 4e-5 & 4e-5& 4e-5&4e-5\\
			Batch size & 64 & 64& 64&64\\
			Iterations & 50 & 100& 100&100 \\
			The range of $N$ (passport mean) & [1, 50] & [1, 200] & [1, 100] & [1, 50] \\
			The range of $\sigma^2$ (passport variance) & [1, 9] & [1, 64] & [1, 25] & [1,  9] \\
			\hline
	\end{tabular}}
	
	\caption{Hyper-parameters used for training in FedPass.}
	\label{tab:train-params}
\end{table*}

\subsection{Federated Learning Settings}
We simulate a VFL scenario by splitting a neural network into a bottom model and a top model and assigning the bottom model to each passive party and the top model to the active party. For the 2-party scenario, the passive party has features when the active party owns the corresponding labels. For the 7-party scenario, i.e., the ModelNet, each passive party has two directions of 3D shapes when the active party owns the corresponding label following \cite{liu2022cross}. Table \ref{table:models-app} summarizes our VFL scenarios. Also, the VFL framework follows algorithm 1 of \cite{liu2022vertical}.

% We simulate a horizontal federated learning system with K = 100 clients in a stand-alone machine with 8 Tesla V100-SXM2 32 GB GPUs and 72 cores of Intel(R) Xeon(R) Gold 61xx CPUs. In each communication round, the clients update the weight updates, and the server adopts \textit{Fedavg} \cite{mcmahan2017communication} algorithm to aggregate the model updates. The detailed experimental hyper-parameters are listed in Tab. \ref{tab:train-params}.

\subsection{Privacy Attack methods}
We investigate the effectiveness of FedPass on three attacks designed for VFL
\begin{itemize}
    \item Passive Model Completion (PMC) attack \cite{fu2022label} is the label inference attack. For each dataset, the attacker leverages some auxiliary labeled data (40 for MNIST and CIFAR10, 366 for ModelNet) to train the attack model and then the attacker predicts labels of the test data using the trained attack model. 
    \item CAFE attack \cite{jin2021cafe} is a feature reconstruction attack. We assume the attacker knows the forward embedding $H$ and passive party's model $G_{\theta}$, and we follow the white-box model inversion step (i.e., step 2) of CAFE to recover private features owned by a passive party.
    \item Model Inversion (MI) attack is a feature reconstruction attack. We follow the black-box setting of \cite{he2019model}, in which the attacker does not know the passive model. For each dataset, attacker leverages some labeled data (10000 for MNIST and CIFAR10, 1000 for ModelNet) to train a shallow model approximating the passive model, and then they use the forward embedding and the shallow model to infer the private features of a passive party inversely.  
\end{itemize}
For CAFE and MI attacks, we add the total variation loss~\cite{jin2021cafe} into two feature recovery attacks with regularization parameter $\lambda=0.1$.
\subsection{Implementation details of FedPass}
For \textbf{FedPass}, Passports are embedded in the last convolution layer of the passive party's model and the first fully connected layer of the active party's model. Table \ref{tab:train-params} summarizes the hyper-parameters for training FedPass.

\subsection{Notations}
\begin{table}[!htbp] 
  \renewcommand{\arraystretch}{1.05}
  \centering
  \setlength{\belowcaptionskip}{15pt}
  \label{table: notation}
    \begin{tabular}{c|p{5.5cm}}
    \toprule
    %\hline
    Notation & Meaning\cr
    \midrule\
    %$J$ & exposure function\cr
    %$g$ & The model\cr
    % $\calD = (x, y)$ & Original data including input images $\bx$ and label $\by$ \cr \hline
    % $b$ & Batch size of  $\calD$ \cr \hline
    % $n$ & Size (Dimension) of $\calD$ \cr \hline
    % %$S$ & Private information\cr
    % $\hat{\calD}, \hat{\bx}, \hat{\by}$ &  Restored data, input images and label\cr \hline
    $F_\omega, G_{\theta_k}$ & Active model and $j_{th}$ passive model\cr \hline
     $W_p, W_a, W_{att}$ & The 2D matrix of the active model, passive model and attack model for the regression task \cr \hline
     $s^a, s^p$ & The passport of the active model and passive model \cr \hline
    $s^a_\gamma, s^a_\beta,s^p_\gamma, s^p_\beta$ & The passport of the active model and passive model w.r.t $\gamma$ and $\beta$ respectively \cr \hline
    $\gamma, \beta$ & The scaling and bias according to Eq. \eqref{eq:pst1-app}  \cr \hline
    $g_W$ & Adaptive obfuscations w.r.t model parameters $W$ \cr \hline
    $H_k$ & Forward embedding passive party $k$ transfers to the active party \cr \hline
    $\Tilde{\ell}$ & the loss of main task \cr \hline
    $\nabla_{H_k}\Tilde{l}$ & Backward gradients the active party transfers to the passive party $k$ \cr \hline
    $\calD_k= \{x_{k,i}\}_{i=1}^{n_i}$ & $n_i$ private features of passive party $k$ \cr \hline
     $y$ & Private labels of active party \cr \hline
    $K$ & The number of party\cr \hline
    $N$ & The range of Gaussian mean of passports sample \cr \hline
   $\sigma^2$ & The variance of passports sample \cr \hline
   $n_a$ & The number of auxiliary dataset to do PMC attack\cr \hline
    $\Tilde{\ell}_a$ &  Training error on the auxiliary dataset for attackers \cr \hline
        $\Tilde{\ell}_t$ &  Test error on the test dataset for attackers \cr \hline
      $n_a$ & The number of auxiliary dataset to do PMC attack\cr \hline
   $T$ & Number of optimization steps for VFL \cr \hline
      $\eta$ & learning rate \cr \hline
 $\|\cdot \| $ & $\ell_2$ norm \cr 
    \bottomrule
    \end{tabular}
    \caption{Table of Notations}
\end{table}

% We conduct experiments on MNIST, CIFAR10, ImageNet datasets \cite{krizhevsky2009learning, deng2009imagenet} in a horizontal federated learning setting. We refer reviewers to more experiments including table data in Appendix B.4. The network model architectures we investigate include well-known LeNet, AlexNet and ResNet18 \cite{krizhevsky2012imagenet, lecun1998gradient, he2016deep}.  

% \begin{table}[h!]%\huge \renewcommand\arraystretch{1.5}
% \vspace{-1pt}
%     	\caption{Model architecture information for Lenet, AlexNet, ResNet-18 and TextCNN. } \label{tab: archi}
%     \centering
%     \renewcommand{\arraystretch}{1.13}
%     \normalsize
% 	\resizebox{0.46\textwidth}{!}{
	
% 		\begin{tabular}{c|cccc}
% 			\toprule
			
% 			Modules & LeNet &AlexNet & ResNet-18 & TextCNN\\ \midrule
% 			Layer numbers & 5 & 8 &18 & 3 or 5\\
% 			Convolution layers & 2 & 5 & 16 & 1 or 3\\
% 			Conv. kernel size & 5 &3 & 3 & 3, (4, 5) \\
% 			Batch normalization & 0 & 5& 17 & 0 \\
% 			ReLU functions & 4 &7 & 9 & 0\\
% 			  Max pooling layers & 2 & 4 &2 & 3\\
% 			Dropout layers & 0 &2 & 1 & 1\\
% 			Fully connected layers & 3 & 3  &  0  & 1\\ \bottomrule
% 	\end{tabular}}

% \end{table}

\newpage
\section{More Experiments}
This section reports experiments on the multi-party scenario of VFL and ablation study on two critical passport parameters.
\subsection{FedPass on Multi-party Scenario}
We conduct experiments with 7 parties on ModelNet with six passive parties and one active parties. Figure \ref{fig:tradeoff_result_modelnet} compares the trade-offs between recovery error (y-axis) and main task accuracy (x-axis) of FedPass with those of baselines against MI, CAFE, and PMC attacks on three models. We observe: 
\begin{itemize}
    \item DP and Sparsification can generally achieve either a high main task performance or a large recovery error (low privacy leakage), but not both. For example, DP and Sparsification can achieve a main task performance as high as $\geq 0.775$ while obtaining the label recovery error as low as $0.56$ respectively. At the other extreme, DP and Sparsification can achieve $\geq 0.64$ feature recovery error but obtain $\leq 0.70$ main task performance.
    \item InstaHide generally can not thwart MI and CAFE attacks. Even mixing up with more data, InstaHide still leads to a relatively small feature recovery error while its main task performance degrades significantly. CAE also leads the drop of model performance.
    \item The trade-off curves of FedPass reside near the top-right corner under both attacks on all models, indicating that FedPass achieves the best performance on preserving privacy while maintaining the model performance. For example, FedPass achieves $\geq 0.775$ main task accuracy, $\geq 0.054$ feature recovery error under MI attack and $\geq 0.66$ label recovery error under MI attack. 
\end{itemize}

\begin{figure*}[ht]
\vspace{-3pt}
	\centering
   		\begin{subfigure}{0.3\textwidth}
  		 	\includegraphics[width=1\textwidth]{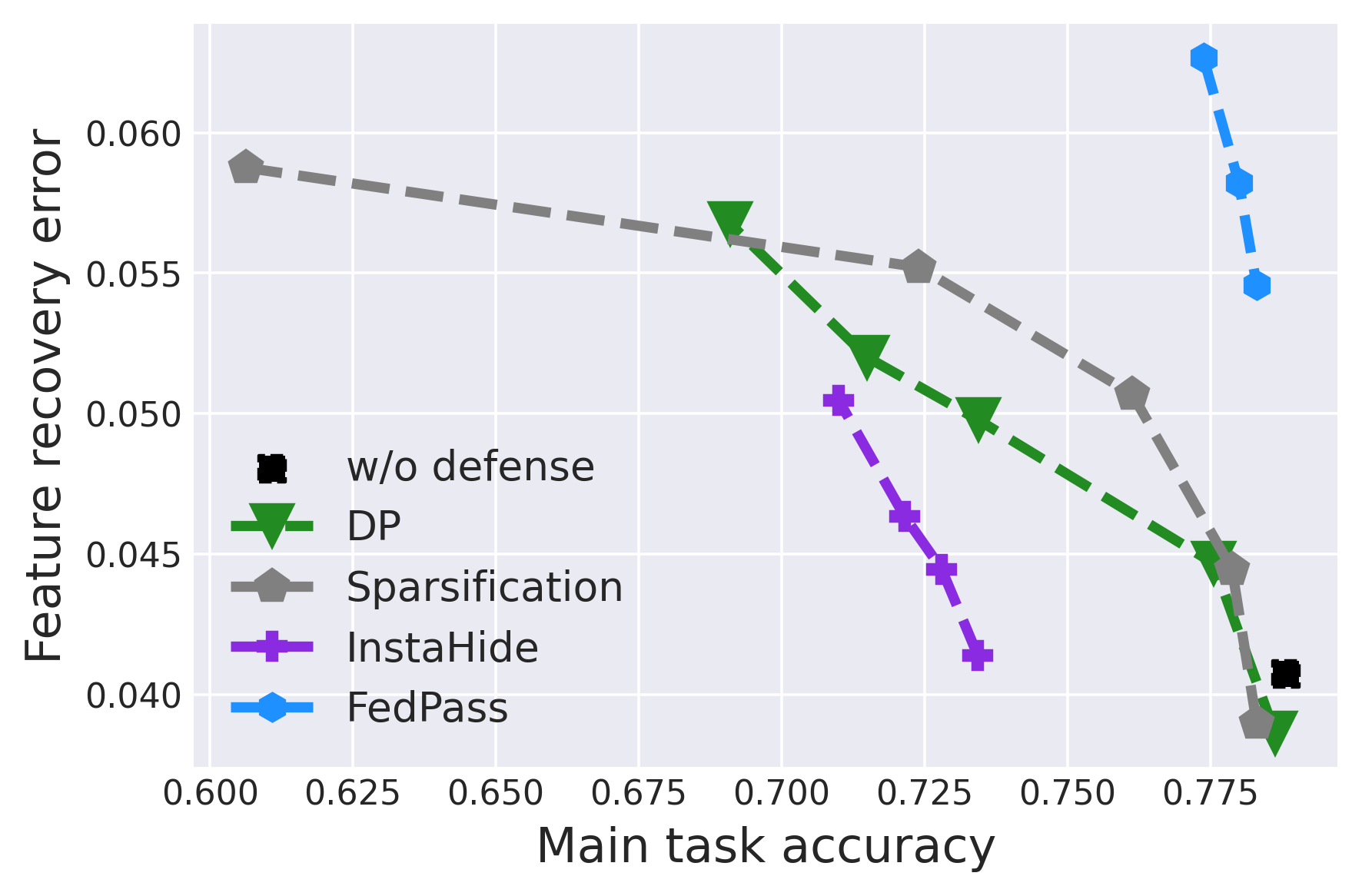}
      \subcaption{}
    		\end{subfigure}
    	\begin{subfigure}{0.3\textwidth}
  		 	\includegraphics[width=1\textwidth]{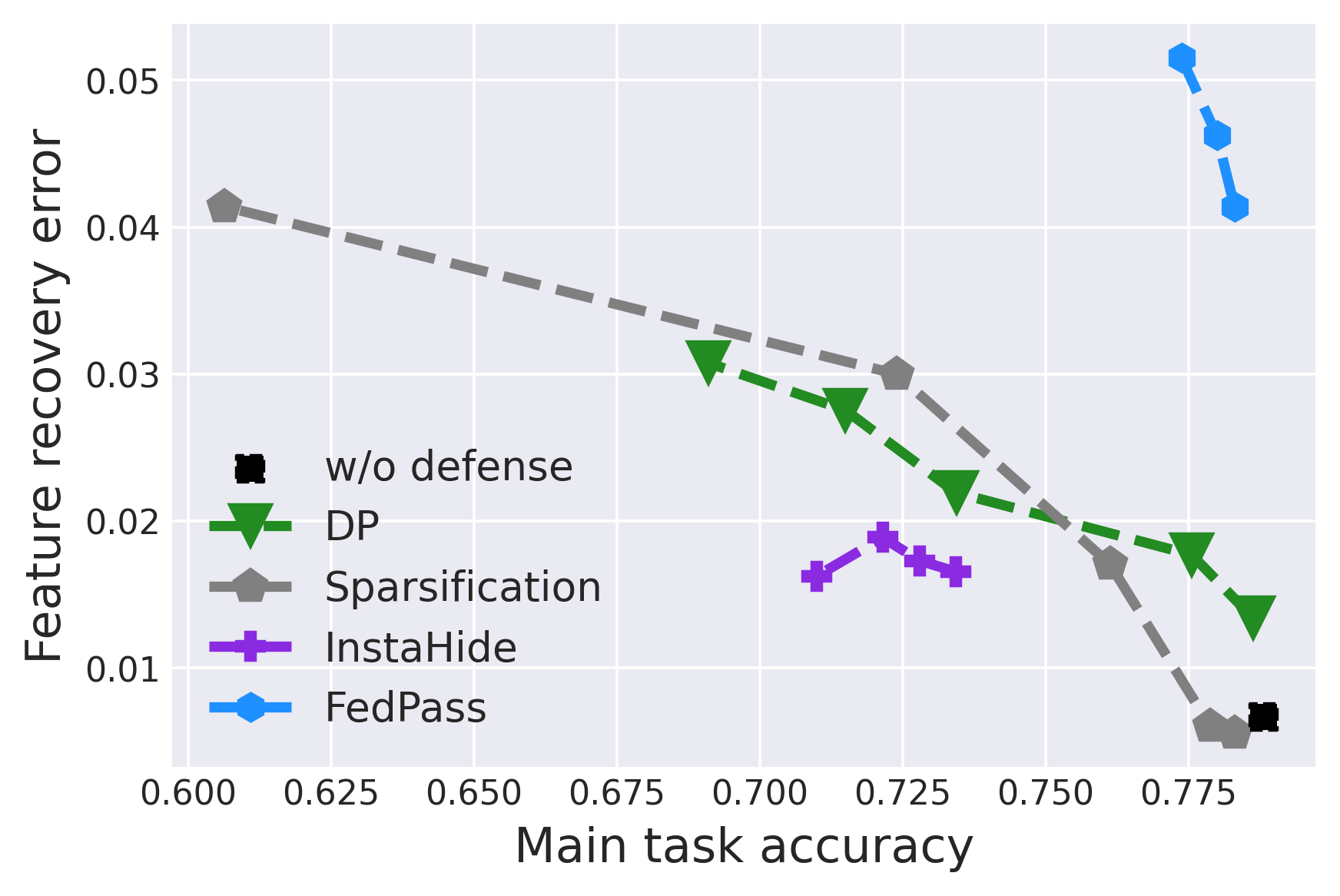}
            \subcaption{}
    		\end{subfigure}
   \begin{subfigure}{0.3\textwidth}
			\includegraphics[width=1\textwidth]{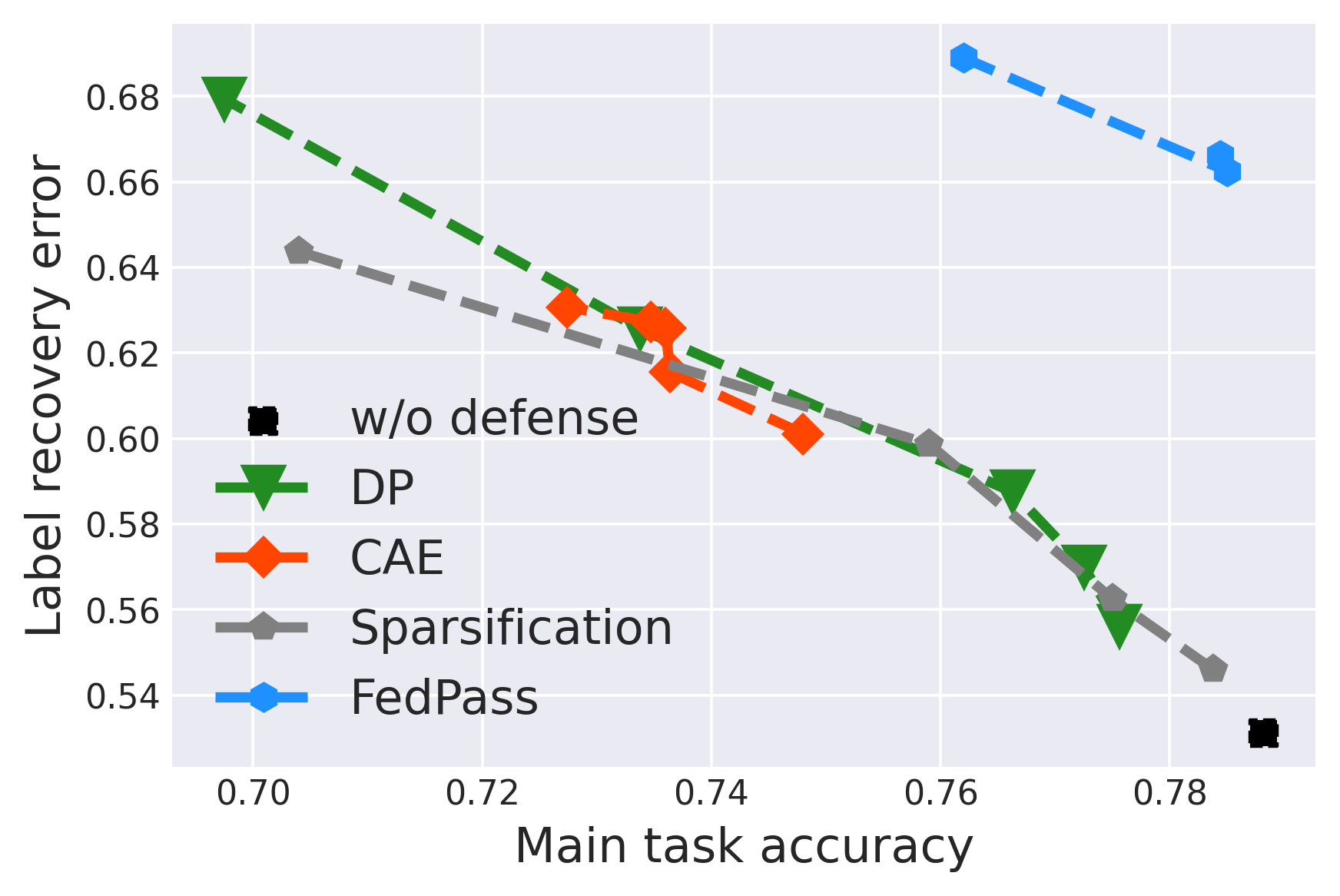}
         \subcaption{}
		\end{subfigure}
\vspace{-0.8em}
 \caption{Comparison of different defense methods in terms of their trade-offs between main task accuracy and data (feature or label) recovery error against three attacks on LeNet-ModelNet, which involves \textbf{7 parties}. \textbf{Model Inversion} (Figure (a)) and \textbf{CAFE} (Figure (b)) are feature reconstruction attacks, whereas \textbf{Passive Model Completion} (Figure (c)) is a label inference attack. \textit{A better trade-off curve should be more toward the top-right corner of each figure}.}
	\label{fig:tradeoff_result_modelnet}
 \vspace{-4pt}
\end{figure*}
\subsection{Ablation Study} \label{sec:aba-tradeoff}
As illustrated in main text, when the passports are embedded in a convolution layer or linear layer with $c$ channels\footnote{For the convolution layer, the passport $s\in \RR^{c\times h_1 \times h_2}$, where $c$ is channel number, $h_1$ and $h_2$ are height and width; for the linear layer, $s\in \RR^{ c\times h_1}$, where $c$ is channel number, $h_1$ is height.}, 
for each channel $j\in[c]$, 
the passport $s{(j)}$ (the $j_{th}$ element of vector $s$) is randomly generated as follows:
\begin{equation}\label{eq:sample-pst-app}
    s{(j)} \sim \calN(\mu_j, \sigma^2), \quad
    \mu_j \in \calU(-N, 0),
\end{equation}
where all $\mu_j, j=1,\cdots, c$ are different from each other, $\sigma^2$ is the variance of Gaussian distribution and $N$ is \textit{the range of passport mean}.  $\sigma$ and $N$ are two crucial parameters determining the privacy-utility trade-off of FedPass. This section analyzes how these two parameters influence the data recovery error and main task accuracy for FedPass. 

\begin{table}[htbp]
\centering
\resizebox{0.3\textwidth}{!}{\begin{tabular}{@{}c|c|c@{}}
\cmidrule(r){1-3}
         $N$ & \shortstack{Main task \\accuracy} & \shortstack{Feature recovery \\error}   \\ \cmidrule(r){1-3}
1   & 0.8798        & 0.0589                  \\
2   & 0.8771        & 0.0818                   \\
5   & 0.8758        & 0.1203           \\
10  & 0.8764        & 0.1204                   \\
\textbf{50}  & 0.8761        & 0.1205                   \\
100 & 0.8759        & 0.1205                   \\
200 & 0.8739        & 0.1205                   \\
\cmidrule(r){1-3}
\end{tabular}}
\caption{Influence of range of passports mean $N$ on main task accuracy and feature recovery error for FedPass against the CAFE attack.}
\label{tab:gaussian-mean}
\end{table}
\subsubsection{Influence of the range of Passport Mean $N$}
For the range of passport mean $N$, we consider the 2-party VFL scenario in AlexNet-CIFAR10 under feature recovery attack, i.e., CAFE \cite{jin2021cafe}, and MI \cite{he2019model} attacks. Fixed the $\sigma=1$, Table \ref{tab:gaussian-mean} provides the trade-off between main task accuracy and feature recovery error with different passport mean, which illustrates that when the range of passport mean increases, the privacy is preserved at the expense of the minor degradation of model performance less than 1\%. This phenomenon is consistent with our analysis of Theorem \ref{thm:thm1-app}. It is suggested that the $N$ is chosen as 100 when the model performance is less than 0.5\%.
% Please add the following required packages to your document preamble:
% \usepackage{booktabs}

\subsubsection{Influence of Passport Variance $\sigma^2$}
For the passport variance $\sigma^2$, we consider the 2-party VFL scenario in AlexNet-CIFAR10 under label recovery attack \cite{fu2022label}, i.e., PMC attack \cite{fu2022label}. Fixed the range of passport mean $N =100$, Table \ref{tab:gaussian-var} shows that the main task accuracy decreases and label recovery error increases when the passport variance increases, which is also verified in Theorem \ref{thm2-app}. It is suggested that the $\sigma^2$ is chosen as 5 when the model performance is less than 0.5\%.

% Please add the following required packages to your document preamble:
% \usepackage{booktabs}
% Please add the following required packages to your document preamble:
% \usepackage{booktabs}
\begin{table}[htbp]
\centering
\resizebox{0.3\textwidth}{!}{\begin{tabular}{c|c|c}
\toprule
$\sigma^2$ & \shortstack{Main task\\ accuracy} & \shortstack{Label recovery \\ error} \\ \midrule
0        & 0.877         & 0.368              \\
2        & 0.873         & 0.510               \\
\textbf{5}        & 0.870         & 0.543              \\
10       & 0.864         & 0.568              \\
50       & 0.858         & 0.577              \\
70       & 0.844         & 0.617              \\
100      & 0.791         & 0.751              \\ \bottomrule
\end{tabular}}
\caption{Influence of passports variance $\sigma^2$ on main task accuracy and label recovery error for FedPass.}
\label{tab:gaussian-var}
\end{table}

\section{The General Adaptive Obfuscation in VFL}
Consider a neural network $f_\Theta(x):\calX \to \RR$, where $x \in \calX$, $\Theta$ denotes model parameters of neural networks. In the VFL, $K$ passive parties and one active party collaboratively optimize $\Theta = (\omega, \theta_1, \cdots, \theta_K)$ of network according to Eq. \eqref{eq:loss-VFL-app}.
\begin{equation}\label{eq:loss-VFL-app}
\begin{split}
        \min_{\omega, \theta_1, \cdots, \theta_K} &\frac{1}{n}\sum_{i=1}^n\ell(F_{\omega} \circ (G_{\theta_1}(x_{1,i}),G_{\theta_2}(x_{2,i}), \\
        & \cdots,G_{\theta_K}(x_{K,i})), y_{i}),
\end{split}
\end{equation}
% \begin{equation}
% \begin{split}
%         &\sum_{i=1}^N\ell(f_\Theta(x_i), y_i) =  \\
%         &\sum_{i=1}^N\ell(F_{\omega} \circ (G_{\theta_1}(x_{i,1}),G_{\theta_2}(x_{i,2}),\cdots,G_{\theta_K}(x_{i,K})), y_{i}),
%         % &\sum_{i=1}^N\ell(f_\Theta(x_i), y_i) =  \\
%         % &\sum_{i=1}^N\ell(F_{\omega} \circ (G_{\theta_1}(x_{i,1}),\cdots,G_{\theta_K}(x_{i,K})), y_{i}),
% \end{split}
% \end{equation}
% where $x$ is the input data owned by the passive party and $y$ is the ground truth label owned by the active party, $\theta$ and $\omega$ are passive model and active model respectively. 

Define the adaptive obfuscation module on both the passive and active party to be: $\{g_{\theta_k, \theta_k'}( s^{p_k},)\}_{k=1}^K,g_{\omega, \omega'}(s^a,)$, where $s^{p_k}$ and $s^a$ are the controlled parameters of the passive party $j$ and the active party respectively, $\{\theta_k'\}_{k=1}^K$ and $\omega'$ are additional learnable parameters $g()$ introduce. The adaptive obfuscation framework in VFL aims to optimize:
\begin{equation} \label{eq:loss-aof}
\begin{split}
        \min_{\omega,\omega', \theta_1, \theta_1',\cdots, \theta_K,\theta_K'} &\frac{1}{N}\sum_{i=1}^N\ell(F_{\omega}g_{\omega, \omega'} \circ (G_{\theta_1}(g_{\theta_1, \theta_1'}(x_{1,i}, s^{p_1})), \\
        &\cdots, G_{\theta_K}(g_{\theta_K,\theta_K'}(x_{K,i},s^{p_K})),s^a, y_{i}),
\end{split}
\end{equation}
Denote the composite function $F_{\omega}\cdot g_{\omega, \omega'}()$ as $F'_{\omega, \omega'}()$ and $G_{\theta_j}g_{\theta_j, \theta_j'}()$ as $G'_{\theta_j, \theta_j'}()$, $j=1, \cdots, K$. Then we rewrite the Eq. \eqref{eq:loss-aof} as follows 
\begin{equation} \label{eq:loss-aof-2}
\begin{split}
        \min_{\omega,\omega', \theta_1, \theta_1',\cdots, \theta_K,\theta_K'} &\frac{1}{N}\sum_{i=1}^N\ell(F'_{\omega, \omega'} \circ (G'_{\theta_1,\theta_1'}(x_{1,i}, s^{p_1})), \\
        &\cdots, G'_{\theta_K,\theta_K'}(x_{K,i}, s^{p_K})) , s^a, y_{i}),
\end{split}
\end{equation}
Minimizing the Eq. \eqref{eq:loss-aof-2} could be regarded as two optimization procedures with the same loss. One is to maximize model performance w.r.t model parameters $\omega, \theta_1, \cdots, \theta_K$ given the training feature $x$ and label $y$. The second is to train the model parameters $\omega,\omega', \theta_1, \theta_1',\cdots, \theta_K,\theta_K'$ with main task loss to let the adaptive obfuscation module  $\{g_{\theta_k, \theta_k'}( s^{p_k},)\}_{k=1}^K,g_{\omega, \omega'}(s^a,)$ fit the main task.
\subsection{Derivative of Adaptive Obfuscation for FedPass}
Specifically, we take the FedPass as one example, the adaptive obfuscation function is the following:
\begin{equation}\label{eq:pst1-app}
\begin{split}
        g_{W,W'}(x_{in}, s) = &
        %\gamma^l \mathbf{X}_p + \beta =
        \gamma(Wx_{in}) + \beta, \\
         \gamma=&\text{Avg}\Big( D\big(E(Ws_\gamma)\big)\Big)\\
        \beta = &\text{Avg}\Big( D\big(E(Ws_\beta)\big)\Big)
\end{split}
\end{equation}
% where $W$ denotes the model parameters of the neural network, $x$ is the input fed to a convolution or linear layer, $\gamma$ and $\beta$ are the scale factor and the bias term. Note that the determination of the crucial parameters $\gamma$ and $\beta$ involves the model parameter $W$ with private passports $s_\gamma$ and $s_\beta$, followed by a autoencoder (Encoder $E$ and Decoder $D$) 
where $W$ denotes the model parameters of the neural network layer for inserting passports, $x_{in}$ is the input fed to $W$, $\gamma$ and $\beta$ are the scale factor and the bias term. Note that the determination of the crucial parameters $\gamma$ and $\beta$ involves the model parameter $W$ with private passports $s_\gamma$ and $s_\beta$, followed by a autoencoder (Encoder $E$ and Decoder $D$ with parameters $W'$). The the derivative for $g$ w.r.t. $W$ has the following three backpropagation paths via $\beta, \gamma, W$:
\begin{equation} \label{eq:upload-gradients-app1}
\frac{\partial g}{\partial W} = \left\{
\begin{aligned}
x_{in} \otimes diag(\gamma)^T + \beta \quad & W \text{ path}\\
(Wx_{in})^T\frac{\partial \gamma}{\partial W}  \quad & \text{$\gamma$ path}\\
\frac{\partial \beta}{\partial W},  \quad & \text{$\beta$ path}\\
\end{aligned}
\right.
\end{equation}
where $\otimes$ represents Kronecker product.
Moreover, the derivative for $g$ w.r.t. $W'$ is:
\begin{equation} \label{eq:upload-gradients-app2}
\frac{\partial g}{\partial W'} = \left\{
\begin{aligned}
\frac{\partial g}{\partial \gamma}\frac{\partial \gamma}{\partial W'}  \quad & \text{$\gamma$ path}\\
\frac{\partial g}{\partial \beta}\frac{\partial \beta}{\partial W'}  \quad & \text{$\beta$ path}\\
\end{aligned}
\right.
\end{equation}

% with parameters $W'$

% The former one is to maximize model performance given the fixed passport and the latter one is to train the model $\omega,\omega', \theta_1, \theta_1',\cdots, \theta_K,\theta_K'$ to mitigate the influence of obfuscation function w.r.t. private passports $(s^a, s^p)$. 

% \begin{rmk} \textcolor{red}{why this remark is needed?}
% If the obfuscation function has extra parameters needed to be learned, denote as $\{\theta_j'\}_{j=1}^K$ and $\omega'$ for the passive and active party respectively, then the loss could be expressed as:
% \begin{equation} \label{eq:loss-aof-3}
% \begin{split}
%         \min_{\omega, \omega', \theta_1, \theta_1', \cdots, \theta_K,\theta_K'} &\frac{1}{N}\sum_{i=1}^N\ell(F_{\omega, \omega'} \circ (G_{\theta_1, \theta_1'}(x_{i,1}, s^{p_1})), \\
%         &\cdots, G_{\theta_K, \theta_K'}(g_{\theta_K,s^{p_K}}(x_{i,K})) , s^a, y_{i}).
% \end{split}
% \end{equation}
% \end{rmk}

\section{Proof}
We investigate the privacy-preserving capability of FedPass against feature reconstruction attack and label inference attack. Note that we conduct the privacy analysis with linear regression models, for the sake of brevity. 

\begin{definition} \label{def:SplitFed-app}
Define the forward function of the passive model $G$ and the active model $F$: 
\begin{itemize}
    \item For passive layer: $H = G(x) =  W_p s_\gamma^p \cdot W_p x + W_p s_\beta^p$.
    \item For active layer: $y = F(H) =  W_a s_\gamma^a \cdot W_a  H + W_a s_\beta^a$.
\end{itemize}
where $W_p$, $W_a$ are 2D matrices of the passive and active models; $\cdot$ denotes the inner product, $ s_\gamma^p,  s_\beta^p$ are passports of the passive party, $ s_\gamma^a,  s_\beta^a$ are passports of the active party.

\end{definition}

\subsubsection{Hardness of feature Restoration with FedPass}
Consider the white-box Model Inversion (MI) attack (i.e., model inversion step in CAFE \cite{jin2021cafe,he2019model}) that aims to inverse the model $W_p$ to recover features $\hat{x}$ approximating original features $x$. In this case, the attacker (i.e., the active party) knows the passive model parameters $W_p$, forward embedding $H$ and the way of embedding passport, but does not know the passport.

\begin{lem} \label{lem1-app}
Suppose a passive model as the Def. \ref{def:SplitFed-app} illustrates, an attack estimate the private feature by guessing passports $s_{\gamma'}^p$ and $s_{\beta'}^p$. If $W_p$ is invertible, We could obtain the difference between $x$ and estimated $\hat{x}$ by the adversary in the following two cases:
\begin{itemize}
    \item When inserting the $s_\gamma^p$ for the passive party,
    \begin{equation}
    \|x - \hat{x}\|_2 \geq  \frac{\|(D_\gamma^{-1}-D_{\gamma'}^{-1}) H\|_2}{\| W_p\|_2}.
\end{equation}
\item When inserting the $s_\beta^p$ for the passive party,
\begin{equation} \label{eq:beta-lem-app}
     \|x - \hat{x}\|_2 = \|D_\beta-D_{\beta'}\|_2,
\end{equation}
\end{itemize}
where $D_\gamma=diag(W_ps_\gamma^p),D_{\gamma'}=diag(W_ps_{\gamma'}^p),D_\beta=diag(W_ps_\beta^p),D_{\beta'}=diag(W_ps_{\beta'}^p)$ and $W_p^\dag$ is the Moore–Penrose inverse of $W_p$. 
\end{lem}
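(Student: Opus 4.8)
The plan is to obtain each bound by writing down the adversary's reconstruction $\hat{x}$ explicitly, subtracting it from the true $x$, and then controlling the resulting expression with a single operator-norm inequality. Recall from Definition~\ref{def:SplitFed-app} that the passive layer computes $H = D_\gamma W_p x + W_p s_\beta^p$ with $D_\gamma = \text{diag}(W_p s_\gamma^p)$. Because the attacker knows $W_p$, the embedding $H$, and the functional form but not the passport, the best reconstruction available is to substitute a guessed passport pair $(s_{\gamma'}^p, s_{\beta'}^p)$ and invert the resulting affine map; since $W_p$ is assumed invertible, this inversion is well defined once the relevant diagonal factor is nonsingular. So first I would record $\hat{x}$ as this inverse image in each of the two regimes.

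For the $\gamma$-only case I would take the bias to be the known part and write $H = D_\gamma W_p x$, so that $x = W_p^{-1} D_\gamma^{-1} H$ and $\hat{x} = W_p^{-1} D_{\gamma'}^{-1} H$. Subtracting gives the exact identity $x - \hat{x} = W_p^{-1}\bigl(D_\gamma^{-1} - D_{\gamma'}^{-1}\bigr) H$, which reduces the claim to a lower bound on $\|W_p^{-1} v\|_2$ for $v = (D_\gamma^{-1}-D_{\gamma'}^{-1})H$. The key step is the standard singular-value inequality $\|W_p^{-1} v\|_2 \ge \sigma_{\min}(W_p^{-1})\,\|v\|_2 = \|v\|_2/\sigma_{\max}(W_p) = \|v\|_2/\|W_p\|_2$, which holds for every $v$ and hence needs no knowledge of the unknown direction of $v$. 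Substituting $v$ back yields exactly $\|x-\hat{x}\|_2 \ge \|(D_\gamma^{-1}-D_{\gamma'}^{-1})H\|_2/\|W_p\|_2$.

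For the $\beta$-only case the obfuscation enters additively rather than multiplicatively, so the scale factor is inert and $H = W_p x + W_p s_\beta^p$. Inverting with the guessed bias gives $\hat{x} = W_p^{-1}H - s_{\beta'}^p$ while $x = W_p^{-1}H - s_\beta^p$, so the forward embedding $W_p^{-1}H$ cancels exactly and the discrepancy is determined solely by the difference of the guessed and true bias passports. Collecting this difference into the diagonal matrices $D_\beta = \text{diag}(W_p s_\beta^p)$ and $D_{\beta'}=\text{diag}(W_p s_{\beta'}^p)$ produces the stated equality $\|x-\hat{x}\|_2 = \|D_\beta - D_{\beta'}\|_2$; note that here no singular-value slack is incurred, so the relation is an equality rather than an inequality.

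I expect the multiplicative ($\gamma$) case to be the only place requiring care. Its difficulty is that inversion forces $W_p^{-1}$ to act on the passport-induced difference vector, so a naive estimate would depend on the unknown alignment between that vector and the singular directions of $W_p$; the remedy is to pass to $\sigma_{\min}(W_p^{-1}) = 1/\|W_p\|_2$, which is direction-free, at the cost of turning the identity into a lower bound. I would also state the invertibility hypotheses explicitly, namely that $W_p$ is invertible (given) and that $D_\gamma, D_{\gamma'}$ have nonzero diagonal entries so that $D_\gamma^{-1}, D_{\gamma'}^{-1}$ exist; under the sampling scheme of Eq.~\eqref{eq:sample-pst-app} this holds almost surely. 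Finally, I would observe that these two bounds are precisely what Theorem~\ref{thm:thm1-app} consumes: demanding $\|x-\hat{x}\|_2 \le \epsilon$ confines the admissible guessed passport to a small ball, whose measure under the passport prior drives the exponential-in-$m$ recovery-probability estimate.
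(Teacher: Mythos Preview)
Your proposal is correct and mirrors the paper's proof almost exactly: both cases proceed by writing the attacker's reconstruction as the inverse of the affine map with the guessed passport, subtracting, and in the $\gamma$-case invoking the operator-norm bound $\|W_p^{-1}v\|_2 \ge \|v\|_2/\|W_p\|_2$ (the paper phrases this via submultiplicativity rather than singular values, but it is the same inequality). One small note: in the $\beta$-case your derivation yields $\|x-\hat x\|_2=\|s_\beta^p-s_{\beta'}^p\|_2$, and the step ``collecting this into $\|D_\beta-D_{\beta'}\|_2$'' is not an identity in general---but the paper's own proof also concludes with $\|s_\beta^p-s_{\beta'}^p\|_2$, so this discrepancy lives in the lemma statement itself rather than in your argument.
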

\begin{proof}
For inserting the $s_\beta^p$,
\begin{equation*}
    H = W_p s_\gamma^p * W_p x.
\end{equation*}
The attacker with knowing $W_p$ has
\begin{equation*}
    H =W_p s_{\gamma'}^p * W_p \hat{x}.
\end{equation*}
Denote $D_\gamma=diag(W_ps_\gamma^p),D_{\gamma'}=diag(W_ps_{\gamma'}^p)$. If $A$ is invertible, due to
\begin{equation*}
    \|A\|_2\|A^{-1}B\|_2 \leq \|AA^{-1}B\|_2 = \|B\|_2,
\end{equation*}
we have 
\begin{equation*}
    \|A^{-1}B\|_2 \geq \frac{\|B\|_2}{\|A\|_2}.
\end{equation*}   
Therefore, if $W_P^\dag$ is the Moore–Penrose inverse of $W_p$ and $W_p$ is invertible, we can obtain
\begin{equation}\label{eq:case1}
    \begin{split}
        \|\hat{x}-x\|&=\|W_p^\dag D_\gamma^{-1}H-W_p^\dag D_{\gamma'}^{-1}H\|_2 \\
        &= \| W_p^\dag(D_\gamma^{-1}-D_{\gamma'}^{-1})H \|_2 \\
        &\geq \frac{\|(D_\gamma^{-1}-D_{\gamma'}^{-1}) H\|_2}{\| W_p\|_2}.
    \end{split}
\end{equation}
On the other hand, for inserting the $s_\beta^p$ of passive party,
\begin{equation*}
     H =  W_p x + W_p s_\beta^p.
\end{equation*}
The attacker with knowing $W_p$ has
\begin{equation*}
    H  = W_p x + W_p s_{\beta'}^p.
\end{equation*}
We further obtain
\begin{equation}\label{eq:case2}
    \begin{split}
        \|\hat{x}-x\|&=\|W_p^\dag (H-D_\beta)-W_p^\dag (H-D_{\beta'})\|_2 \\
        &= \| W_p^\dag(W_p {s_\beta}^p- W_p s_{\beta'}^p) \|_2 \\
        &= \|s_\beta^p - s_{\beta'}^p \|_2.
    \end{split}
\end{equation}
\end{proof}
\begin{rmk}
If the adversary don't know the existence of passport layer, then $D_{\gamma'} = I$ and $s_{\beta'}^p =0$ in Eq. \eqref{eq:case1} and \eqref{eq:case2}.
\end{rmk}
\begin{rmk}
Eq. \eqref{eq:beta-lem-app} only needs $W_p$ is a left inverse, i.e., $W_p$ has linearly independent columns. 
\end{rmk}

Lemma \ref{lem1-app} illustrates that the difference between estimated feature and original feature has the lower bound, which depends on the difference between original passport and inferred passport. Specifically, if the inferred passport is far away from $\|s_\beta^p - s_{\beta'}^p \|_2$ and $\|(D_\gamma^{-1}-D_{\gamma'}^{-1})\|$ goes large causing a large reconstruction error by attackers. Furthermore, we provide the analysis of the probability of attackers to reconstruct the $x$ in Theorem \ref{thm:thm1-app}. Let $m$ denote the dimension of the passport via flattening, $N$ denote the passport range formulated in Eq. \eqref{eq:sample-pst-app} and $\Gamma(\cdot)$ denote the Gamma distribution.

\begin{theorem}\label{thm:thm1-app}
    Suppose the passive party protects features $x$ by inserting the $s_{\beta}^p$. The probability of recovering features by the attacker via white-box MI attack is at most $\frac{\pi^{m/2}\epsilon^m}{\Gamma(1+m/2)N^m}$ such that the recovering error is less than $\epsilon$, i.e., $\|x-\hat{x}\|_2\leq \epsilon$,
\end{theorem}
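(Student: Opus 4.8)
The plan is to reduce the theorem to a volume-ratio (geometric probability) estimate, using the exact identity already furnished by Lemma \ref{lem1-app}. When the passive party protects features with $s_\beta^p$, the lemma gives $\|x-\hat x\|_2 = \|s_\beta^p - s_{\beta'}^p\|_2$, where $s_{\beta'}^p$ is the attacker's guessed passport. Hence the event that recovery succeeds to within $\epsilon$, i.e.\ $\|x-\hat x\|_2 \le \epsilon$, is precisely the event that the true (random) passport $s_\beta^p$ falls inside the Euclidean ball $B_\epsilon(s_{\beta'}^p)$ of radius $\epsilon$ centred at the attacker's guess. Since the attacker has no access to $s_\beta^p$, I treat $s_{\beta'}^p$ as an arbitrary fixed point and bound $\Pr[s_\beta^p \in B_\epsilon(s_{\beta'}^p)]$ uniformly.

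First I would bound the probability density of $s_\beta^p$. By Eq.~\eqref{eq:sample-pst-app} each coordinate is sampled as $s(j)\sim\calN(\mu_j,\sigma^2)$ with $\mu_j\sim\calU(-N,0)$, so its marginal density is the convolution $p(t)=\frac1N\int_{-N}^0 \frac{1}{\sqrt{2\pi}\sigma}e^{-(t-\mu)^2/(2\sigma^2)}\,d\mu = \frac1N\bigl(\Phi(\tfrac{t+N}{\sigma})-\Phi(\tfrac{t}{\sigma})\bigr)$, where $\Phi$ denotes the standard normal CDF. Because the bracketed quantity is a difference of CDF values it never exceeds $1$, giving $p(t)\le 1/N$ for every $t$. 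The coordinates are drawn independently, so the joint density of $s_\beta^p\in\RR^m$ is bounded by $1/N^m$ everywhere.

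Then comes the geometric step: for any measurable set $S$, $\Pr[s_\beta^p\in S]=\int_S p \le \frac{1}{N^m}\,\mathrm{Vol}(S)$. Taking $S=B_\epsilon(s_{\beta'}^p)$ and inserting the classical volume of an $m$-dimensional Euclidean ball, $\mathrm{Vol}(B_\epsilon)=\pi^{m/2}\epsilon^m/\Gamma(1+m/2)$, yields $\Pr[\|x-\hat x\|_2\le\epsilon] \le \frac{\pi^{m/2}\epsilon^m}{\Gamma(1+m/2)N^m}$, which is exactly the claimed bound; as this holds for every guess $s_{\beta'}^p$, it holds in particular for the attacker's.

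The main obstacle I anticipate is the density-bound step rather than the volume computation: one must justify that sampling a uniform mean plus Gaussian noise produces a density bounded by $1/N^m$, that this bound is uniform in the attacker's guess, and that it factorizes across coordinates via independence. If instead one idealises the passport as its mean, uniform on the cube $(-N,0)^m$, the density equals $1/N^m$ exactly and the same ratio follows immediately. A secondary point worth stating carefully is that, because the density bound is uniform over $\RR^m$, no optimisation of the guess $s_{\beta'}^p$ can beat the stated probability, so the bound is genuinely a worst-case (over the defender's randomness) guarantee against the best fixed attack.
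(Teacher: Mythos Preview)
Your proposal is correct and follows the same geometric-probability route as the paper: invoke Lemma~\ref{lem1-app} to turn $\|x-\hat x\|_2\le\epsilon$ into $\|s_\beta^p-s_{\beta'}^p\|_2\le\epsilon$, then bound the probability by the volume of an $\epsilon$-ball divided by $N^m$. The paper's own argument is terser---it simply treats the passport as uniform on the cube $(-N,0)^m$ and takes the volume ratio---whereas you additionally justify the $1/N^m$ density bound for the actual sampling scheme (uniform mean plus Gaussian noise) via the convolution estimate $p(t)\le 1/N$; this is a harmless strengthening of the same argument rather than a different approach.
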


\begin{proof}
    According to Lemma \ref{lem1-app}, the attacker aims to recover the feature $\hat{x}$ within the $\epsilon$ error from the original feature $x$, that is, the guessed passport needs to satisfy: 
    \begin{equation}
        \|s_{\beta'}^p - s_\beta^p  \|_2 \leq \epsilon
    \end{equation}
Therefore, the area of inferred passport of attackers is sphere with the center $s_{\beta}^p$ and radius $\epsilon$. And the volumes of this area is at most $\frac{\pi^{m/2}\epsilon^m}{\Gamma(1+m/2)}$, where $F$ represent the Gamma distribution and $m$ is dimension of the passport. Consequently, the probability of attackers to successfully recover the feature within $\epsilon$ error is:
\begin{equation*}
\begin{split}
        p \leq \frac{\pi^{m/2}\epsilon^m}{\Gamma(1+m/2)N^m},
        \end{split}
\end{equation*}
where the whole space is $(-N,0)^m$ with volume $N^m$.
\end{proof}
Theorem \ref{thm:thm1-app} demonstrates that the attacker's probability of recovering features within error $\epsilon$ is exponentially small in the dimension of passport size $m$. The successful recovering probability is inversely proportional to the passport range $N$, which is consistent with our ablation study in Sect. \ref{sec:aba-tradeoff}.
\begin{rmk}
The attacker's behaviour we consider in Theorem \ref{thm:thm1-app} is that they guess the private passport randomly. 
\end{rmk}
% Noted that the $C$ increases as the absolute value of the passport $s_\gamma^p$ becomes large such that the recovered feature is far away from the original feature. However, the model performance is influenced when $|s_\gamma^p|$ increases (We analyze the change of privacy and model performance about $|s_\gamma^p|$ in Appendix B).
\subsubsection{Hardness of Label Recovery with FedPass}
Consider the passive model competition attack \cite{fu2022label} that aims to recover labels owned by the active party. The attacker (i.e., the passive party) leverages a small auxiliary labeled dataset $\{x_i, y_i\}_{i=1}^{n_a}$ belonging to the original training data to train the attack model $W_{att}$, and then infer labels for the test data. Note that the attacker knows the trained passive model $G$ and forward embedding $H_i = G(x_i)$. Therefore, they optimizes the attack model $W_{att}$ by minimizing $\ell = \sum_{i=1}^{n_a}\|W_{att}H_i-y_i\|_2$.

\begin{assumption}\label{assum1-app}
Suppose the original main algorithm of VFL is convergent. For the attack model, we assume the error of the optimized attack model $W^*_{att}$ on test data $\tilde{\ell}_t$ is larger than that of the auxiliary labeled dataset $\Tilde{\ell}_a$.
\end{assumption}

\begin{rmk}
The test error is usually higher than the training error because the error is computed on an unknown dataset that the model hasn't seen.
\end{rmk}
% they wants to learn a mapping from $G(x_i)=H_i$ to $y_i$, $i=1, \cdots, n$ by minimizing the regression loss $\Tilde{\ell} = \sum_{i=1}^n\|G(x_i)-y_i\|_2$. Assume the the regression loss of FedPass is zero, then $y_i = F(H_i) = W_a s_{\gamma,i}^a \cdot W_a  H_i + s_{\beta, i}^aW_a$.
\begin{theorem} \label{thm2-app}
Suppose the active party protect $y$ by embedding $s^a_\gamma$, and adversaries aim to recover labels on the test data with the error $\Tilde{\ell}_t$ satisfying:
    \begin{equation}
        \Tilde{\ell}_t \geq \min_{W_{att}} \sum_{i=1}^{n_a}\|(W_{att}- T_i)H_i\|_2,
    \end{equation}
    where $T_i =diag(W_as_{\gamma,i}^a) W_a$ and $s_{\gamma,i}^a$ is the passport for the label $y_i$ embedded in the active model. Moreover, if $H_{i_1} = H_{i_2} = H$ for any $1\leq i_1,i_2 \leq n_a$, then
    \begin{equation} \label{eq:protecty-app}
        \Tilde{\ell}_t \geq \frac{1}{(n_a-1)}\sum_{1\leq i_1<i_2\leq n_a}\|(T_{i_1}-T_{i_2})H\|_2
    \end{equation}
\end{theorem}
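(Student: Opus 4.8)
The plan is to reduce the attacker's optimization to a purely geometric statement about the points $\{T_iH_i\}$ and then bound it from below by their pairwise distances. First I would record the forward relation implied by Definition~\ref{def:SplitFed-app} when the active party protects a label by embedding only the scaling passport $s_{\gamma,i}^a$: reading the product in the active layer consistently with the stated $T_i=\mathrm{diag}(W_as_{\gamma,i}^a)W_a$, i.e.\ as the coordinatewise scaling of $W_aH_i$ by $W_as_{\gamma,i}^a$, the label produced for sample $i$ is $y_i=T_iH_i$. Substituting this into the attacker's objective $\ell=\sum_{i=1}^{n_a}\|W_{att}H_i-y_i\|_2$ identifies the optimized training error as
\begin{equation*}
\Tilde{\ell}_a=\min_{W_{att}}\sum_{i=1}^{n_a}\|(W_{att}-T_i)H_i\|_2 .
\end{equation*}
Assumption~\ref{assum1-app} asserts $\Tilde{\ell}_t\ge\Tilde{\ell}_a$, which is precisely the first claimed inequality, so this part requires only the substitution together with the assumption.

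For the refined bound I would specialize to the hypothesis $H_i=H$ for all $i$. Fix any matrix $W_{att}$ and write $w=W_{att}H$ and $v_i=T_iH$, so that $\sum_i\|(W_{att}-T_i)H\|_2=\sum_i\|w-v_i\|_2$. The engine of the argument is the triangle inequality applied to every pair: for all $i_1<i_2$,
\begin{equation*}
\|w-v_{i_1}\|_2+\|w-v_{i_2}\|_2\ge\|v_{i_1}-v_{i_2}\|_2 .
\end{equation*}
Summing this over all $\binom{n_a}{2}$ pairs, each quantity $\|w-v_i\|_2$ is counted exactly $n_a-1$ times (once for each of its possible partner indices), so the left-hand side equals $(n_a-1)\sum_{i}\|w-v_i\|_2$, giving
\begin{equation*}
\sum_{i=1}^{n_a}\|w-v_i\|_2\ge\frac{1}{n_a-1}\sum_{1\le i_1<i_2\le n_a}\|(T_{i_1}-T_{i_2})H\|_2 .
\end{equation*}
Since this holds for every $W_{att}$, it holds for the minimizer; chaining it with $\Tilde{\ell}_t\ge\Tilde{\ell}_a$ yields Eq.~\eqref{eq:protecty-app}.

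The only genuinely delicate point is the combinatorial bookkeeping in the pairwise sum: getting the multiplicity $n_a-1$ correct is exactly what produces the constant $1/(n_a-1)$, and it is worth emphasizing that I never need $W_{att}H$ to range over the full output space, because the triangle-inequality bound is valid for each $W_{att}$ individually before the minimum is taken. A secondary point I would state carefully is the interpretation of the active layer's scaling as the diagonal map $T_i$, which is what lets the affine relation collapse to $y_i=T_iH_i$; once that is fixed, the remainder is the geometric lower bound above and the proof is essentially immediate.
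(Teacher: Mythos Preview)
Your proposal is correct and mirrors the paper's own proof: both substitute $y_i=T_iH_i$ into the attacker's loss, invoke Assumption~\ref{assum1-app} to pass from $\Tilde{\ell}_a$ to $\Tilde{\ell}_t$, and then, under $H_i\equiv H$, apply the triangle inequality to every pair $(i_1,i_2)$ and count that each term $\|(W_{att}-T_i)H\|_2$ appears $n_a-1$ times. Your bookkeeping over unordered pairs is in fact a bit cleaner than the paper's version, which routes through the ordered sum $\sum_{i_1\neq i_2}$ before collapsing to $\sum_{i_1<i_2}$, but the argument is the same.
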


\begin{proof}
For only inserting the passport $s^a_\gamma$ of active party, according to Assumption \ref{assum1-app}, we have
\begin{equation}
    y_i=   W_a s_\gamma^a \cdot W_a  H_i
\end{equation}
Moreover, the attackers aims to optimize
\begin{equation*}
\begin{split}
       &\min_W\sum_{i=1}^{n_a}\|WH_i-y_i\|_2  \\
       = &\min_W\sum_{i=1}^{n_a}\|WH_i - W_a s_\gamma^a \cdot W_a  H_i\|_2 \\
       =&\min_W\sum_{i=1}^{n_a}\|(W-T_i)H_i\|_2,
\end{split}   
\end{equation*}
where $T_i =diag(W_as_{\gamma,i}^a) W_a$. Therefore, based on Assumption \ref{assum1-app}, $\Tilde{\ell}_t\geq \min_W\sum_{i=1}^{n_a}\|(W-T_i)H_i\|_2$. Moreover, if $H_i = H_j = H$ for any $i_1,i_2 \in [n_a]$, then
\begin{equation*}
    \begin{split}
        \Tilde{\ell}_t& \geq \min_W\sum_{i=1}^{n_a}\|(W-T_i)H_i\|_2 \\
        &= \min_W \frac{1}{2(n_a-1)}\sum_{1\leq i_1,i_2\leq n_a}(\|(W-T_{i_1})H\|_2 \\
        &  \qquad  + \|(W-T_{i_2})H\|_2) \\
        &\geq \min_W \frac{1}{2(n_a-1)}\sum_{1\leq i_1,i_2\leq n_a}(\|(T_{i_1}-T_{i_2})H\|_2) \\
       & =  \frac{1}{(n_a-1)}\sum_{1\leq i_1<i_2\leq n_a}(\|(T_{i_1}-T_{i_2})H\|_2)
    \end{split}
\end{equation*}
\end{proof}

\begin{prop}\label{prop1-app}
Since passports are randomly generated and $W_a$ and $H$ are fixed, if the $W_a = I, H=\Vec{1}$, then it follows that:
\begin{equation}
    \Tilde{\ell}_t \geq \frac{1}{(n_a-1)}\sum_{1\leq i_1<i_2\leq n_a}\|s_{\gamma,i_1}^a-s_{\gamma,i_2}^a\|_2)
\end{equation}
\end{prop}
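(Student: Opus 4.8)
The plan is to obtain Proposition \ref{prop1-app} as a direct specialization of the sharpened lower bound already established in Theorem \ref{thm2-app}, rather than re-deriving anything from the attacker's optimization problem. Since the proposition fixes the common embedding $H_{i_1} = H_{i_2} = H = \mathbf{1}$ across all auxiliary indices, the hypothesis needed for inequality \eqref{eq:protecty-app} is automatically satisfied, so I may start directly from
\[
\Tilde{\ell}_t \geq \frac{1}{n_a-1}\sum_{1\leq i_1<i_2\leq n_a}\|(T_{i_1}-T_{i_2})H\|_2,
\qquad T_i = diag(W_a s_{\gamma,i}^a)\, W_a,
\]
and reduce the task to simplifying the generic summand $\|(T_{i_1}-T_{i_2})H\|_2$ under the two structural assumptions $W_a = I$ and $H = \mathbf{1}$.

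First I would substitute $W_a = I$ into the definition of $T_i$: the factor $W_a$ on the right disappears and $diag(W_a s_{\gamma,i}^a)$ becomes $diag(s_{\gamma,i}^a)$, so each $T_i$ collapses to the diagonal matrix built from the passport vector $s_{\gamma,i}^a$. Next I would evaluate the matrix–vector product against the all-ones vector using the elementary identity $diag(v)\,\mathbf{1} = v$, valid for any vector $v$. By linearity of the $diag(\cdot)$ operator this yields $(T_{i_1}-T_{i_2})\mathbf{1} = diag(s_{\gamma,i_1}^a - s_{\gamma,i_2}^a)\,\mathbf{1} = s_{\gamma,i_1}^a - s_{\gamma,i_2}^a$, hence $\|(T_{i_1}-T_{i_2})H\|_2 = \|s_{\gamma,i_1}^a - s_{\gamma,i_2}^a\|_2$. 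Substituting this back term by term into the sum produces exactly the claimed bound.

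There is no genuine obstacle here: the statement is a corollary whose only content is the algebraic collapse of $T_i$ to $diag(s_{\gamma,i}^a)$ together with the observation that a diagonal matrix applied to the all-ones vector returns its own diagonal. The one point I would state explicitly, to keep the argument airtight, is that the hypotheses of the refined inequality \eqref{eq:protecty-app}—namely a single shared embedding $H$ across all auxiliary samples—are indeed in force, which is immediate because $H = \mathbf{1}$ is held fixed for every $i$; everything else is a one-line substitution into the bound of Theorem \ref{thm2-app}.
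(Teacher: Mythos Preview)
Your proposal is correct and follows essentially the same approach as the paper: both apply the refined bound \eqref{eq:protecty-app} of Theorem~\ref{thm2-app} and then observe that under $W_a = I$ and $H = \mathbf{1}$ the term $(T_{i_1}-T_{i_2})H$ collapses to $s_{\gamma,i_1}^a - s_{\gamma,i_2}^a$. Your write-up is in fact more explicit about the algebraic step $diag(v)\,\mathbf{1} = v$ than the paper's own proof.
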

\begin{proof}
  When $W_a =I$ and $H = \Vec{1}$, $(T_{i_1}-T_{i_2})H = s_{\gamma,i_1}^a-s_{\gamma,i_2}$ . Therefore, based on Theorem \ref{thm2-app}, we obtain
  \begin{equation}
          \Tilde{\ell}_t \geq \frac{1}{(n_a-1)}\sum_{1\leq i_1<i_2\leq n_a}\|s_{\gamma,i_1}^a-s_{\gamma,i_2}^a\|_2)
  \end{equation}
\end{proof}

Theorem \ref{thm2-app} and Proposition \ref{prop1-app} show that the label recovery error $\Tilde{\ell}_t$ has a lower bound,
which deserves further explanations. 
\begin{itemize}
    \item First, when passports are randomly generated for all data, i.e., $s_{\gamma,i_1}^a \neq s_{\gamma,i_2}^a$, then a non-zero label recovery error is guaranteed no matter how adversaries attempt to minimize it. The recovery error thus acts as a protective random noise imposed on true labels.
    \item Second, the magnitude of the recovery error monotonically increases with the variance $\sigma^2$ of the Gaussian distribution passports sample from (It is because the difference of two samples from the same Gaussian distribution $\calN(\mu, \sigma^2)$ in Eq. \eqref{eq:sample-pst-app} depends on the variance $\sigma^2$), which is a crucial parameter to control privacy-preserving capability. Experiments on Sect. \ref{sec:aba-tradeoff} also verify this phenomenon. 
    \item Third, it is worth noting that the lower bound is based on the training error of the auxiliary data used by adversaries to launch PMC attacks. Given possible discrepancies between the auxiliary data and private labels, e.g., in terms of distributions and the number of dataset samples, the actual recovery error of private labels can be much larger than the lower bound. 
\end{itemize}
 
% $ \Tilde{\ell} = \min_{W} \sum_{i=1}^n\|(W- T_i)H\|_2$ could be regarded to looking for one point, s.t. the sum of distance between this point and $n$ fixed points $T_i$ in the projection direction $H_i$ is minimum. Noted that as $|s_\gamma^a|$  becomes large, the $n$ fixed points $T_i$ is separated more in $\mathbb{R}^{m_1\times m_2}$. Consequently, the distance $\ell$ goes large meaning the passive party can not learn labels of active party well. However, the model performance is influenced when $|s_\gamma^a|$ increases (see details in Appendix B).

\newpage
\bibliographystyle{named}
% \bibliography{Reference/attack.bib, Reference/protection.bib, Reference/cryptography.bib, Reference/FL.bib, Reference/model.bib}
\bibliography{related}

\end{document}